
\documentclass[12pt, draftclsnofoot, onecolumn]{IEEEtran}
%

\usepackage{amsmath}
\usepackage{amsthm}
\usepackage{exscale,relsize}
\usepackage{cite}
\usepackage{graphicx}
\usepackage{cases}
\usepackage{epstopdf}
\usepackage{amsfonts,amsmath,amssymb}
\usepackage{graphicx}
\usepackage{bm}
\usepackage{bbm}
\usepackage{color}
\usepackage{caption}
\usepackage{enumerate}
\DeclareMathOperator*{\st}{s.t.}
\theoremstyle{definition}
\newtheorem{Lemma}{Lemma}

\newtheorem{Assumption}{Assumption}
\newtheorem{prop}{Proposition}

\newtheorem{Remark}{Remark}

\usepackage{algorithm,algorithmic}

\setlength\abovedisplayskip{2pt}
\setlength\belowdisplayskip{2pt}
\setlength{\abovecaptionskip}{2pt}
\setlength{\belowcaptionskip}{-20pt}

%

%

%
\ifCLASSINFOpdf
\else
\fi
\hyphenation{op-tical net-works semi-conduc-tor}

\begin{document}
%
\title{\LARGE{Online Cognitive Data Sensing and Processing Optimization in Energy-harvesting Edge Computing Systems}

\author{Xian~Li,~
	Suzhi~Bi,~
	Zhi~Quan,~
	and~Hui Wang}

\thanks{{X.~Li, S.~Bi, and Z.~Quan are with the College of Electronics and Information Engineering, Shenzhen University, China
		({xianli,bsz,zquan}@szu.edu.cn). H.~Wang is with the Shenzhen Institute of Information Technology, China (wanghui@sziit.edu.cn).} Part of this work has been submitted to IEEE/CIC International Conference on Communications in China (ICCC) 2021 \cite{LiICCC2021}. }}

%
%
%

\maketitle

\vspace{-3em}
\begin{abstract}
Mobile edge computing (MEC) has recently become a prevailing technique to alleviate the intensive computation burden in Internet of Things (IoT) networks. However, the limited device battery capacity and stringent spectrum resource significantly restrict the data processing performance of MEC-enabled IoT networks. To address the two performance limitations, we consider in this paper an MEC-enabled IoT system with a wireless device (WD) replenishing its battery by means of energy harvesting (EH) and opportunistically accessing the licensed spectrum of an overlaid primary communication link to offload its sensing data to an MEC server (MS) for edge processing. Under time-varying fading channel, random energy arrivals, and stochastic ON-OFF state of the primary link, we aim to design an online algorithm to jointly control the cognitive data sensing rate and processing method (i.e., local and edge processing) without knowing future system information. In particular, we aim to maximize the long-term average sensing rate of the WD subject to quality of service (QoS) requirement of primary link, average power constraint of MS and data queue stability of both MS and WD. We formulate the problem as a multi-stage stochastic optimization and propose an online algorithm named PLySE that applies the perturbed Lyapunov optimization technique to decompose the original problem into per-slot deterministic optimization problems. For each per-slot problem, we derive the closed-form optimal solution of data sensing and processing control to facilitate low-complexity real-time implementation. Interestingly, our analysis finds that the optimal solution exhibits an threshold-based structure related to the current energy state, secondary queueing backlogs and primary link activity. Simulation results collaborate with our analysis and demonstrate more than 46.7\% data sensing rate improvement of the proposed PLySE over representative benchmark methods.
\end{abstract}
\vspace{-5pt}
\begin{IEEEkeywords}\vspace{-0em}
	Mobile edge computing, energy harvesting, cognitive radio system, online optimization algorithm.
\end{IEEEkeywords}

\section{Introduction}
\subsection{Motivations and Contributions}
Mobile edge computing (MEC) has been widely recognized as a key enabling technology towards data-intensive and latency-critic applications in Internet of Things (IoT) systems, such as smart manufacturing and industrial automation \cite{Aceto2019}, which deploys massive number of IoT devices (e.g., wireless sensors) capable of sensing, communication and computation. Via pushing the computation resource toward network edge, MEC allows IoT devices to offload intensive computation tasks to the nearby edge server for faster execution \cite{Mao2017,Bi2018a,You2016,Huang2020}. However, constrained by device size and manufacturing cost, an IoT device is often equipped with low-capacity battery that can hardly support sustainable operations especially in energy-hungry intelligent applications. Meanwhile, the large-scale deployment of IoT systems demands wideband spectrum resource and can cause severe interference to co-channel wireless communication systems, e.g., WiFi and cellular networks. Overall, the fundamental limitations on device energy and spectrum resource significantly restrict the data processing capability of existing MEC-enabled IoT systems.\

Recently, energy harvesting (EH) has emerged as a promising technique to mitigate the energy shortage of IoT devices \cite{Ulukus2015,Kamalinejad2015}. In particular, EH enables IoT devices to scavenge renewable energy from external sources like solar and thermal power to local batteries, with which they can replenish device power consumption in real-time. Recent studies have considered powering IoT devices in mobile edge computing applications by means of EH \cite{Xu2017a,Min2019,Zhang2018,Wu2018a,Mao2016,Chen2020}. Under stochastic renewable energy arrivals, the task offloading decisions of EH-enabled wireless devices (WDs) are highly coupled across sequential time slots by the time-varying battery state. Therefore, the optimal design of an online computation offloading strategy in EH-MEC system requires to reach a good balance between the current and future system performance.\ 

On another front, compared to purchasing expensive licensed bandwidth, it is more cost-effective to implement cognitive radio (CR) technique in IoT networks for opportunistic spectrum access to alleviate spectrum scarcity \cite{Quan2008,Wu2014,Khan2017}. In this case, IoT devices as the secondary users (SUs) periodically sense the spectrum licensed to primary users (PUs) and opportunistically access the channel given that the PUs can maintain a satisfied quality of service (QoS). Thanks to the effective spectral reuse of CR technology, we can achieve scalable deployment of IoT networks that coexist with existing communication infrastructures under stringent spectrum resource constraint.\

In this paper, we aim to build a sustainable and scalable MEC-enabled IoT system, where the IoT device relies on harvesting ambient renewable energy for power supply and uses CR technique to access the spectrum licensed to a primary communication system. The joint application of CR and EH techniques has the potential to fully address the two fundamental performance limitations of MEC-enabled IoT systems, however, also raises new technical challenges. On one hand, the optimal data sensing (collecting task data) and processing (local computing and task offloading) solution at each time instant is affected by the double randomness of PU link activity and energy arrivals. For instance, the offloading data rate is constrained by not only the available energy but also the current PU link ON/OFF operating state. On the other hand, the solutions are tightly coupled over time due to the temporal correlations of PU link activities and EH process. It can result in large task data queue backlog if we independently maximize the data processing rate within each time slot in a greedy manner. Overall, it requires jointly considering both the short-term and long-term effects of stochastic PU link state and EH process to optimize the data processing capability of an MEC-enabled IoT system. 
\begin{figure}[h]
	\centering
	\includegraphics[scale=0.85]{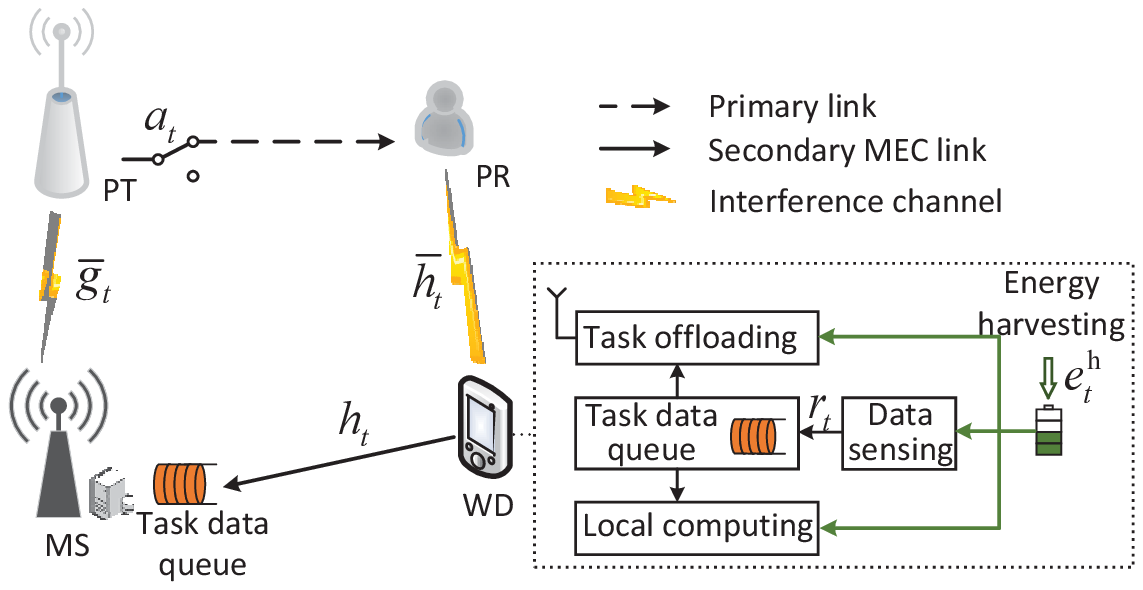}
	\captionsetup{font=footnotesize}
	\caption{The considered MEC system model.}
	\label{Sys_mod}
\end{figure}

As a starting point to gain essential insights of the optimal designs, we consider a basic system setup in a cognitive EH-MEC system that consists of a primary transmitter (PT), a primary receiver (PR), an MEC server (MS), and an energy-harvesting WD, as shown in Fig. \ref{Sys_mod}. Within each sequential time slot, the WD acquires sensing data from the environment and buffers the data in a task data queue. Meanwhile, it processes the task data either locally or by offloading to the MS utilizing the licensed spectrum of primary link. Under time-varying fading channel, random energy arrivals, and stochastic ON-OFF state of the primary link, our objective is to design an online algorithm that maximizes the average data sensing rate of the WD subject to QoS requirement of the primary link, long-term average power constraint at the MS, and data queue stability at both MS and WD. To the best of our knowledge, this is the first paper that studies the long-term performance optimization in a cognitive EH-MEC system. The main contributions of this paper are:
\begin{itemize}
	\item \emph{Joint Cognitive Data Sensing and Processing Design}: We formulate the problem as a multi-stage stochastic optimization that decides the cognitive data sensing and processing solutions in sequential time slots. In particular, we focus on designing a practical online algorithm that obtains the control solution in each time slot without future system information. The two major challenges of the online design are to satisfy all the long-term constraints under the randomness of multiple system parameters and the strong coupling of control solutions over different time slots. 	
	\item \emph{Low-complexity Online Algorithm}: We propose an online algorithm named PLySE that applies the perturbed Lyapunov optimization technique to remove the time dependency of control decisions and transforms the multi-stage stochastic problem into per-slot deterministic optimization problems. For each per-slot problem, we propose a low-complexity algorithm that obtains the optimal solutions in closed-form to facilitate real-time implementation. Interestingly, we find that both the optimal sensing and task execution solutions follow simple threshold-based structure that is directly related to the primary link activity, secondary battery state and queueing backlogs.
	\item \emph{Theoretical Performance Analysis}: We prove that if a mild condition on the battery capacity is satisfied, the online algorithm PLySE always produces a feasible solution to the original multi-stage stochastic optimization problem. Meanwhile, we prove that PLySE achieves an $\left[O(1/V),O(V)\right]$ tradeoff asymptotically between the sensing rate and processing delay, where $V$ is a tunable parameter that a larger $V$ leads to higher sensing rate, and vice versa. We conduct extensive simulations to verify the performance of the proposed PLySE method, where we show that it satisfies all the long-term constraints and achieves more than 46.7\% higher sensing rate than the considered representative benchmark methods.
\end{itemize}


\subsection{Related Works}\label{sec2}
1) \textbf{Energy harvesting edge computing}:
EH-powered MEC technology has attracted significant attention in recent years. For example, \cite{Xu2017a} studied the optimal online offloading and autoscaling policy in a renewable edge computing system, where multiple co-located MSs are powered by a common EH cell site. \cite{Min2019} designed a deep reinforcement learning (DRL)-based online task offloading strategy for an EH IoT device assisted by multiple MSs. Considering an EH-MEC with an MS and an energy harvesting WD, \cite{Zhang2018} proposed a task offloading policy that balances execution delay and energy consumption of the WD. On the other hand, \cite{Wu2018a} focused on the geographical load balancing control among energy harvesting MSs, and proposed an online task offloading policy to minimize the backhaul data traffic and computation workload. Aiming at minimizing the execution latency and task failure rate, \cite{Mao2016} developed a Lyapunov optimization-based algorithm to jointly optimize the offloading decision and resource allocation in a single-WD EH-MEC system. Besides, considering hybrid energy supply at the edge devices, \cite{Chen2020} studied dynamic control on task scheduling and energy management to maximize the MEC system utility. It is worth mentioning that all these studies assume that the MEC system occupies a dedicated bandwidth. However, this will incur enormous spectrum license cost as the number of IoT devices soars up. In comparison, applying CR technology in MEC systems to opportunistically utilize the spectrum licensed to existing communication networks is a more cost-effective and scalable solution. In this case, the optimal task offloading solution is directly affected by the dynamic spectrum access method, where the conventional optimization methods (such as in \cite{Xu2017a,Min2019,Zhang2018,Wu2018a,Mao2016,Chen2020}) are no longer applicable.\

2) \textbf{Cognitive radio technology in MEC}:
Recently, some studies have introduced CR to MEC-enabled IoT systems to address the inherent spectrum scarcity problem. For example, \cite{Liu2019} maximized the energy efficiency of a secondary relay in a cognitive MEC network powered by wireless energy transfer. \cite{Apostolopoulos2020} designed a risk-aware task offloading strategy for secondary WDs based on game theory. \cite{Zhang2020a} studied the trade-off between spectrum sensing and task offloading, and proposed an energy-aware task offloading policy using deep reinforcement learning. To optimize the task processing performance, \cite{Si2017} focused on the routing and bandwidth scheduling in a three-layer cognitive MEC network. \cite{Du2018} considered a vehicular MEC system opportunistically accessing the TV white space for computation offloading and minimized the monetary cost of vehicular terminals and MEC server on task processing. Nonetheless, all these works either assume static wireless channel gain \cite{Liu2019,Apostolopoulos2020} or constant power supply at WDs \cite{Zhang2020a,Si2017,Du2018}. In the case of random channel fading and energy arrivals in our considered setup, the design of computation offloading strategy in the secondary MEC system is challenged by the unknown system parameters and the strong coupling of offloading decisions over different time slots. 

Overall, most existing works separately investigate the applications of EH and CR to MEC-enabled IoT network. In our considered cognitive EH-MEC system, however, the time-varying PU link operating state and EH process jointly affect the optimal data sensing and processing performance. In general, this calls for a joint consideration of random energy arrivals and stochastic ON-OFF state of the primary link in the online algorithm design. In this paper, we find that the optimal online data sensing and processing solution exhibits an interesting threshold structure directly related to the PU link activity and EH process, and validate the effectiveness of the online control via simulations.  

The rest of the paper is organized as follows. In Section \ref{sec3}, we introduce the system model of the cognitive EH-MEC network and formulate the sensing rate maximization problem. We propose the PLySE method to solve the problem in Section \ref{sec4} and show the feasibility and optimality of PLySE in Section \ref{secV}. In Section \ref{sec6}, we conduct numerical simulations to evaluate the proposed method. Finally, we conclude the paper in Section \ref{sec7}. 

\section{System Model and Problem Formulation}\label{sec3}
We consider in Fig. \ref{Sys_mod} an MEC system consisting of one MS and one energy harvesting WD that share the narrow spectrum band with a pair of primary users, i.e., a PT and a PR. The MS is connected to a stable power grid, while the WD is solely powered by external energy sources (e.g., solar power, thermal energy and wind energy). The MS assists the computation of WD in sequential time slots of equal duration $T$. Notice that the considered single-user MEC system may correspond to a typical user in a multi-user network, where the MS uses dedicated computing power, memory and communication bandwidth to serve the user via infrastructure vitalization technique \cite{Bi2020}. In time slot $t$, WD collects raw sensing data from the monitored environment, stores it into the local task data queue, and processes the data later. We assume that WD adopts a partial computation offloading rule, whereby the raw task data can be arbitrary divided into two parts with one computed locally and the other opportunistically offloaded to the MS for edge processing, using the spectrum licensed to the primary link. The primary link activity follows a general ON/OFF random process, e.g., 1-stage Markov process. Specifically, we use a binary indicator $a_t$ to denote the ON/OFF state of primary link, where $a_t=1$ and 0 denote that the primary link is active and idle, respectively. In particular, we assume that the PT is active with probability $\bar{a}$, i.e., $Pr(a_t=1)=\bar{a}$. With the primary and secondary systems coexisting in the same spectrum, computation offloading of secondary MEC system will cause interference to the concurrent primary communications.  \

\subsection{Communication Model}
Denote the channel gain between MS and WD, PT and MS, PR and WD as $h_t$, $\bar{g}_t$, and $\bar{h}_t$, respectively. We assume block fading for signal propagation, i.e., $h_t$, $\bar{g}_t$ and $\bar{h}_t$ are constant in a time slot $t$ while vary randomly from one slot to another. In order to maintain the service quality of primary transmission, we consider interference constraints which confines the co-channel interference and noise power suffered by the PU to a threshold $\Gamma_{\rm th}$, i.e.,
\begin{equation}\label{SNR_thre}
a_t\left(W\delta_{\rm p}^2+p_{t}^{\rm u}\bar{h}_t-\Gamma_{\rm th}\right)\leq 0, \forall t=0, 1, \cdots,
\end{equation}
where $W$ is the system bandwidth and $\delta_{\rm p}^2$ is the power spectrum density of additive while Gaussian noise (AWGN) at the PU. Notice that constraint \eqref{SNR_thre} is imposed only when the primary link is active ($a_t = 1$) and becomes immaterial otherwise (i.e., when $a_t = 0$). $p_t^{\rm u}$ is the transmit power of WD that is limited by the maximum value $p_{\rm max}$. We assume that the PT transmits with fixed power $P_{\rm B}$. Then, the offloading task data size from WD to MS in time slot $t$ is
\begin{equation}
l_t^{\rm off} = WT\log_2\left(1+p_{t}^{\rm u}\gamma_t\right),
\end{equation}
where $\gamma_t={h_{t}}/{\left(a_tP_{\rm B}\bar{g}_{t}+W\delta_{\rm s}^2\right)}$ is the signal-to-interference-plus-noise ratio (SINR) at the MS and $\delta_{\rm s}^2$ is the power spectrum density of AWGN at the MS. Correspondingly, the energy consumption of the WD for data transmission is
\begin{equation}
e_t^{\rm off} = p_t^{\rm u}T.
\end{equation}

\subsection{Task Data Sensing and Computation Model}
In time slot $t$, the WD collects $r_t\leq r_{\rm max}$ bits of raw measurement data, where $r_{\rm max}$ is the maximum sensing data size in $T$ (e.g., determined by the maximum measurement sampling rate or sensing resolution). We model the energy consumption on data sensing as\cite{Liu2020}
\begin{equation}
e_t^{\rm col}=e_{\rm unit}^{\rm col}r_t,
\end{equation}
where $e_{\rm unit}^{\rm col}$ in Jolue/bit is the unit energy cost for data sensing. The sensed data is piled in a local storage for subsequent task computation, where each task data bit is executed either locally at the WD or remotely at the MS via task offloading. We denote $f_t^{\rm u}$ as the local CPU frequency, where $f_t^{\rm u}$ is constrained by its maximum value $f^{\rm u}_{\rm max}$. Then, the local processing data size and the corresponding energy consumption for local computing are
\begin{equation}\label{eq_time_cost_comp_loc} 
l_t^{\rm loc} = f_t^{\rm u}T/C, ~e_t^{\rm loc} = \kappa_{\rm c}\left(f_t^{\rm u}\right)^2Cl_t^{\rm loc}=\kappa_{\rm c}\left(f_t^{\rm u}\right)^3T,
\end{equation}
respectively, where $C$ is the required CPU cycles to process one bit of data and $\kappa_{\rm c}$ is the energy efficiency for local computing. Similarly, we denote the edge CPU frequency at the MS as $f_t^{\rm s}$, which is upper bounded by $f^{\rm s}_{\rm max}$. The edge computation data size $l_t^{\rm edg}$ and the corresponding energy consumption $e_t^{\rm edg}$ are
\begin{equation}\label{eq_time_cost_comp_edg} 
l_t^{\rm edg} = f_t^{\rm s}T/C, ~e_t^{\rm edg} = \kappa_{\rm e}\left(f_t^{\rm s}\right)^2Cl_t^{\rm edg}=\kappa_{\rm e}\left(f_t^{\rm s}\right)^3T,
\end{equation}
respectively, where $\kappa_{\rm e}$ is the energy efficiency for edge computing. Then, the total energy consumption of the WD in time slot $t$ on data sensing and processing is
\begin{equation}
e_t^{\rm u} = e_t^{\rm col} + e_t^{\rm off} + e_t^{\rm loc}.
\end{equation}

\subsection{Task Data Queue Model}
The data sensed in the $t$th time slot is ready for processing at the beginning of the $(t+1)$th time slot. Let $Q_t^{\rm U}$ and $Q_t^{\rm S}$ be the data queue length in the WD and MS at the start of slot $t$, respectively. For analytical tractability, we assume infinite task queue capacity. For both the data queues at the MS and WD, the data processed within the current time slot cannot exceed the data queue backlog, i.e., 
\begin{equation}\label{ledge_off_cons}
0\leq l_t^{\rm off}+l_t^{\rm loc} \leq Q_{t}^{\rm U}, ~~ 0\leq f_t^{\rm s}T/C\leq Q_{t}^{\rm S}, \forall t=0, 1, \cdots.
\end{equation}
As a result, the data queues of the WD and MS evolve as following
\begin{equation}\label{Data_queue_simp}
Q_{t+1}^{\rm U} = Q_t^{\rm U}-l_t^{\rm off}-l_t^{\rm loc} + r_t,~Q_{t+1}^{\rm S} = Q_{t}^{\rm S} - l_t^{\rm edg} + l_t^{\rm off},~t=1,2,\cdots.
\end{equation}
To maintain stable data queues at the WD and MS, we consider stability constraints on $Q_t^{\rm U}$ and $Q_t^{\rm S}$ as following \cite{Neely2010}
\begin{equation}\label{DataQ_stab}
\bar{Q}_{\rm U} = \lim_{N\to+\infty}\frac{1}{N}\sum_{t=1}^{N}\mathbb{E}\left[Q_t^{\rm U}\right]<\infty,~\bar{Q}_{\rm S} = \lim_{N\to+\infty}\frac{1}{N}\sum_{t=1}^{N}\mathbb{E}\left[Q_t^{\rm S}\right]<\infty,
\end{equation}
where the expectation is taken over all random events, i.e., random channel fading, energy arrivals and operation state of the primary link. 
\subsection{Energy Queue Model}	
We model the EH process at the WD as a stochastic process with random energy arrivals in different time slots. In particular, $e_t^{\rm h}$ Joules of energy arrives at WD throughout the $t$-th time slot, where $e_t^{\rm h}$ is bounded above by $E_{\rm max}^{\rm h}$. Let $B_t$ be the battery level of the WD at the beginning of time slot $t$. We consider an energy-aware battery management policy: when $B_t$ is lower than a threshold $B_{\rm min}$, the WD stops consuming energy on data sensing and processing, while only harvesting ambient energy to replenish the battery. As a result, the energy consumed by the WD in the $t$th slot must satisfy
\begin{equation}\label{EH_caus}
0\leq \lambda_{\rm e}e_t^{\rm u}\leq B_t\cdot\mathbbm{1}_{B_t\geq B_{\rm min}},
\end{equation}
where $\lambda_{\rm e}$ is a non-negative scaling factor (e.g., $\lambda_e = 1000$ denotes using mJ as the unit) and $\mathbbm{1}_{\{\cdot\}}$ is the indicator function. The dynamics of battery level is:
\begin{equation}\label{E_evol}
B_{t+1} = \min\left(B_t - \lambda_{\rm e}e_t^{\rm u} + \lambda_{\rm e}e_t^{\rm h}, \Omega\right),
\end{equation}
where $\Omega$ is the battery capacity. 

\subsection{Problem Formulation}
In this paper, we aim at maximizing the long-term average data sensing rate of the WD under system stability and interference constraints. Given limited energy harvested from the ambient environment, this requires the WD to judiciously optimize the data sensing and task computation operations in sequential time slots. Denote the objective function as $\bar{R}=\lim_{N\to+\infty}\frac{1}{N}\sum_{t=0}^{N-1}r_t$. We formulate the target problem as below:
\begin{subequations}\label{Primal_prob}
	\begin{align}
	(\text{P1})~~\underset{\substack{r_t, p_t^{\rm u}, f_t^{\rm u}, f_t^{\rm s},\forall t}} \max~&\bar{R}\\
	\st
	~~& \eqref{SNR_thre}, \eqref{ledge_off_cons}, \eqref{DataQ_stab}, \eqref{EH_caus}\\
	~~& \lim_{N\to+\infty}\frac{1}{N}\sum_{t=1}^{N}\mathbb{E}\left[e_t^{\rm edg}\right] \leq c_{\rm th}, \label{Bud_cons}\\
	~~& 0 \!\leq r_t \!\leq\! r_{\rm max}, 0 \!\leq p_t \!\leq\! p_{\rm max},\! 0 \leq\! f_t^{\rm u}\! \leq\! f^{\rm u}_{\rm max}, \!0 \leq\! f_t^{\rm s}\! \leq\! f^{\rm s}_{\rm max}, \forall t, \label{Prob_const2}
	\end{align}
\end{subequations}
where \eqref{SNR_thre} is the QoS constraint of the primary link. \eqref{ledge_off_cons} and \eqref{EH_caus} are the data causality and energy causality at the WD, repetitively. \eqref{DataQ_stab} denotes the data queue stability constraints. \eqref{Bud_cons} denotes the average power constraint at the MS, where $c_{\rm th}$ is the power threshold. At the beginning of time slot $t$, we assume perfect knowledge of the current system state $s_t=\{\omega_t, I_t\}$ at the MS, where $\omega_t=\{a_t,e_t^{h},h_t,\bar{g}_t,\bar{h}_t\}$ captures the environment random events and $I_t=\{Q_t^{\rm S},Q_t^{\rm U},B_t\}$ is the queue backlog state. We seek an online algorithm that makes control decisions $\{r_t, p_t^{\rm u}, f_t^{\rm u}, f_t^{\rm s}\}$ in the $t$th slot based only on $s_t$. The difficulty of the online design is twofold. First, under the stochastic channels and random access of the primary user, it is hard to meet the long-term requirements when the decisions are made in each time slot without knowing the future system information. Second, due to the random energy supply at the WD, the system decisions in different slots are inherently coupled with each other. This poses great challenge to allocate the battery energy to strike a good balance between the current and future system performance. In the following, we propose a \textbf{P}erturbed-\textbf{Ly}apunov-based online data \textbf{S}ensing and \textbf{E}dge computation (PLySE) algorithm to solve (P1), which controls data sensing and processing in an online manner without requiring a priori knowledge of the system state.

\section{Online Data Sensing and Computation Offloading Optimization}\label{sec4}
\subsection{Perturbed Lyapunov-based Optimization}
Lyapunov optimization is a well-established method to design online algorithm with long-term stability requirement. However, standard Lyapunov optimization technique is not directly applicable to solve (P1) because the feasible control action sets are coupled over time due to the temporally correlated battery energy in constraint \eqref{EH_caus}. Here, we introduce a perturbed Lyapunov method to circumvent this issue. To start with, we introduce for the WD a perturbed battery queue, i.e., 
\begin{equation}\label{Queue_B}
\tilde{B}_t \triangleq  B_t - \Omega.
\end{equation} 
As shown in Section \ref{secV}, by employing a sufficient large battery capacity $\Omega$, we can safely remove the constraint \eqref{EH_caus} without violating the energy causality and at the same time decouple the feasible control action sets in different time slots. We define for the MS a virtual power deficit queue with the update equation:
\begin{equation}\label{Queue_Z}
Z_{t+1} =  \max\left(Z_t + \lambda_{\rm c} e_t^{\rm edg} - \lambda_{\rm c} c_{\rm th},0\right), t=1,2,\cdots,
\end{equation}
where $\lambda_{\rm c}$ is a positive scaling factor. Intuitively, the average power consumption constraint in \eqref{Bud_cons} is satisfied if $Z_t$ is finite as $t\rightarrow \infty$. We rewrite the system queue backlog as $\Theta_t\triangleq\left\{\tilde{B}_t,Z_t,Q_t^{\rm U},Q_t^{\rm S}\right\}$. Further, we define the perturbed Lyapunov function as
\begin{equation}
\Phi_t = \frac{1}{2}\left(\tilde{B}_t\right)^2 + \frac{1}{2}\left(Z_t\right)^2 + \frac{1}{2}\left(Q_t^{\rm U}\right)^2 + \frac{1}{2}\left(Q_t^{\rm S}\right)^2,
\end{equation}
and the Lyapunov drift as
\begin{equation}\label{LyaFunc}
\Delta^t = \mathbb{E}\left[\Phi_{t+1}-\Phi_t|\Theta_t\right],
\end{equation}
where the expectation is with respect to the system random processes given the current system queue state $\Theta_t$. To maximize the data sensing rate $\bar{R}$ while maintaining stable system queue $\Theta_t$, we adopt the drift-plus-penalty approach \cite{Neely2010}, which greedily minimizes an upper bound of the following Lyapunov drift-plus-penalty function in each time slot: 
\begin{equation}\label{LyaDrift}
\Delta^t_V=\Delta^t-V\mathbb{E}\left[r_t|\Theta_t\right].
\end{equation}
where $V$ is a positive weight factor. 

In the following, we provide an upper bound of $\Delta^t_V$. For convenience, we denote a constant
\begin{equation}\label{Dvalue}
D\!\!=\!\!\frac{1}{2}\!\!\left[\left(\lambda_{\rm c}e_{\rm max}^{\rm edg}\right)^2\!\!+\!\!\left(\lambda_{\rm c}c_{\rm th}\right)^2\!\!+\!\!\left(\lambda_{\rm e}e_{\rm max}^{\rm u}\right)^2\!\!+\!\!\left(\lambda_{\rm e}E_{\rm max}^{\rm h}\right)^2\!\!+\!\left(l_{\rm max}^{\rm off}\!\!+\!\!l_{\rm max}^{\rm loc}\right)^2\!\!+\!\!r_{\rm max}^2+\!\left(l_{\rm max}^{\rm edg}\right)^2\!\!+\!\left(l_{\rm max}^{\rm off}\right)^2\!\right],
\end{equation}
where $e_{\rm max}^{\rm edg}\triangleq\kappa_{\rm c}\left(f_{\rm max}^{\rm s}\right)^3T$ is the largest per-slot energy cost at the MS. $e_{\rm max}^{\rm u}\!\triangleq\!e_{\rm unit}^{\rm col}r_{\rm max}\!+\!p_{\rm max}T\!+\!\kappa_{\rm e}\left(f^{\rm u}_{\rm max}\right)^3T$ is the maximum per-slot energy cost at the WD. $l_{\rm max}^{\rm off}\!\triangleq\! \mathbb{E}\left[WT\log_2\left(1\!+\!p_{\rm max}\gamma_t\right)\right]$ corresponds to the maximum average transmission rate of the WD. $l_{\rm max}^{\rm edg}\triangleq\frac{f^{\rm s}_{\rm max}T}{C}$ and $l_{\rm max}^{\rm loc}\triangleq\frac{f^{\rm u}_{\rm max}T}{C}$ are the maximum data processing rate for edge computing and local computing, respectively. 
\begin{Lemma}\label{lem6}
	Under any control method, the Lyapunov drift-plus-penalty function \eqref{LyaDrift} has the following upper bound for all $t$, all possible values of $\Theta_t$, and all parameters $V\geq 0$:
	\begin{equation} \label{LyaDriPlusPen}
	\begin{split}
	\Delta^t_V &\leq D-\! \mathbb{E}\left\{Vr_t \!+ \!Z_t\lambda_{\rm c}\left(c_{\rm th}\!-\!e_t^{\rm edg}\right)\!+\!\lambda_{\rm e}(B_t\!-\!\Omega)(e_t^{\rm u}\!-\!e_t^{\rm h})\right.\\
	&\left. +Q_t^{\rm U}\left(l_t^{\rm off}+l_t^{\rm loc}-r_t\right)+\!Q_t^{\rm S}(l_t^{\rm edg}-l_t^{\rm off})\mid\!\!\Theta_t\right\}.
	\end{split}
	\end{equation}
\end{Lemma}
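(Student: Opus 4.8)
The plan is to bound the one-slot increment $\Phi_{t+1}-\Phi_t$ of the perturbed Lyapunov function by treating each of the four squared queues separately, then take the conditional expectation and subtract the penalty $V\mathbb{E}[r_t\mid\Theta_t]$. For a scalar recursion of the form $X_{t+1}=\mathrm{proj}(X_t+\mu_t)$ --- where $\mathrm{proj}$ is the identity for $Q_t^{\rm U},Q_t^{\rm S}$ in \eqref{Data_queue_simp}, the clip $\max(\cdot,0)$ for $Z_t$ in \eqref{Queue_Z}, and the clip $\min(\cdot,0)$ for $\tilde B_t$ obtained from \eqref{E_evol} and \eqref{Queue_B} --- I would invoke the standard squared-drift inequality $\tfrac12 X_{t+1}^2-\tfrac12 X_t^2\le\tfrac12\mu_t^2+X_t\mu_t$. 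The only fact needed to discard the clipping operators is that $(\max(x,0))^2\le x^2$ and $(\min(x,0))^2\le x^2$ hold for every real $x$, so the inequality is valid irrespective of which projection appears.

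Reading off $\mu_t$ for each queue --- $\lambda_{\rm c}(e_t^{\rm edg}-c_{\rm th})$ for $Z_t$, $\lambda_{\rm e}(e_t^{\rm h}-e_t^{\rm u})$ for $\tilde B_t$, $r_t-l_t^{\rm off}-l_t^{\rm loc}$ for $Q_t^{\rm U}$, and $l_t^{\rm off}-l_t^{\rm edg}$ for $Q_t^{\rm S}$ --- the linear parts $X_t\mu_t$ produce exactly the four cross terms appearing in \eqref{LyaDriPlusPen}. For the battery I would substitute $\tilde B_t=B_t-\Omega$ from \eqref{Queue_B}, so that $\tilde B_t\lambda_{\rm e}(e_t^{\rm h}-e_t^{\rm u})=-\lambda_{\rm e}(B_t-\Omega)(e_t^{\rm u}-e_t^{\rm h})$, which explains why the lemma is stated in $B_t-\Omega$ rather than $\tilde B_t$. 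Summing the four increments and rewriting each cross term $X_t\mu_t$ as the negative of its counterpart in \eqref{LyaDriPlusPen} --- so that all four land inside the single $-\mathbb{E}\{\cdot\}$ bracket --- recovers the bracketed queue terms.

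It remains to absorb the quadratic remainders $\tfrac12\mu_t^2$ into the constant $D$ of \eqref{Dvalue}. The workhorse is the elementary bound $(a-b)^2\le a^2+b^2$, valid whenever $a,b\ge0$ because the cross term $-2ab$ is nonpositive; since $e_t^{\rm edg},c_{\rm th},e_t^{\rm u},e_t^{\rm h},r_t,l_t^{\rm loc},l_t^{\rm edg}$ are all nonnegative, each remainder splits into a sum of squares that I would replace by the corresponding maxima. Note that $Q_t^{\rm U}$ contributes the pair $r_{\rm max}^2$ and $(l_{\rm max}^{\rm off}+l_{\rm max}^{\rm loc})^2$ precisely because $l_t^{\rm off}+l_t^{\rm loc}$ is kept together as a single nonnegative term. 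Adding the four bounds reproduces $D$ term by term.

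The point demanding genuine care --- and what I expect to be the main obstacle --- is the transmission quantity $l_t^{\rm off}$. Unlike $r_t,f_t^{\rm u},f_t^{\rm s}$, which are hard-capped in \eqref{Prob_const2}, $l_t^{\rm off}$ depends on the random SINR $\gamma_t$ and admits no clean deterministic per-slot ceiling, so the remainders for $Q_t^{\rm U}$ and $Q_t^{\rm S}$ cannot simply be bounded pointwise by $(l_{\rm max}^{\rm off})^2$. The correct route is to take $\mathbb{E}[\cdot\mid\Theta_t]$ first and control the conditional second moment of $l_t^{\rm off}$, reading $l_{\rm max}^{\rm off}$ as the finite average transmission-rate bound $\mathbb{E}[WT\log_2(1+p_{\rm max}\gamma_t)]$ (which requires the relevant channel moments to be finite). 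Once the conditional expectation of the summed per-slot bound is taken and $V\mathbb{E}[r_t\mid\Theta_t]$ is folded into the same expectation, the drift-plus-penalty inequality \eqref{LyaDriPlusPen} follows; everything else is routine algebra.
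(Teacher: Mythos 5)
Your proposal is correct and follows essentially the same route as the paper's Appendix~\ref{app0}: per-queue squared-drift inequalities with the clipping facts $\left(\max(x,0)\right)^2\le x^2$ and $\left(\min(x,0)\right)^2\le x^2$, the same four cross terms $X_t\mu_t$ after substituting $\tilde{B}_t=B_t-\Omega$, and absorption of the quadratic remainders into $D$ via $(a-b)^2\le a^2+b^2$ for nonnegative terms. Your closing caveat about $l_t^{\rm off}$ is well taken rather than a divergence --- the paper's step \eqref{lem6_eq3} silently bounds $\mathbb{E}\left[\left(l_t^{\rm off}+l_t^{\rm loc}\right)^2\right]$ by $\left(l_{\rm max}^{\rm off}+l_{\rm max}^{\rm loc}\right)^2$ with $l_{\rm max}^{\rm off}$ defined as an expectation, which (as you note, and Jensen's inequality makes precise) strictly requires a finite conditional second moment of the maximum-power rate rather than the square of its mean.
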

\begin{proof}
	Please refer to Appendix \ref{app0} for detail.
\end{proof}

\renewcommand{\algorithmicrequire}{\textbf{Initialization:}}
\renewcommand{\algorithmicensure}{\textbf{Output:}}
\begin{algorithm}
	\caption{The online PLySE algorithm to solve \eqref{Primal_prob}}
	\label{alg1}
	\begin{algorithmic}[1]
		\REQUIRE The initial system state $s_0=\{\omega_0,\Theta_0\}$, where $\Theta_0 = \{Q_{0}^{\rm S},Q_{0}^{\rm U},\tilde{B}_{0},Z_{0}\}$.
		\FOR {each time slot $t$}
		\STATE Observe the system state $s_t$.
		\STATE Solve problem \eqref{Prob_Per_slot} for $\{r_t^\ast, f_t^{\rm s\ast}, f_t^{\rm u\ast}, p_t^{\rm u\ast}\}$ using \eqref{Opt_fs}, \eqref{Opt_r} and \eqref{Opt_pu_fu}.
		\STATE Execute the control action $\{r_t^\ast, f_t^{\rm s\ast}, f_t^{\rm u\ast}, p_t^{\rm u\ast}\}$ and update $\Theta_{t+1}=\left\{Q_{t+1}^{\rm S},Q_{t+1}^{\rm U},\tilde{B}_{t+1},Z_{t+1}\right\}$ according to \eqref{Data_queue_simp}, \eqref{Queue_B} and \eqref{Queue_Z}, respectively.
		\ENDFOR
	\end{algorithmic}
\end{algorithm}

With Lemma 1, we illustrate the proposed PLySE method to solve (P1) in Algorithm 1. In time slot $t$, PLySE observes the current system states $s_t$ and minimizes the right hand side of \eqref{LyaDriPlusPen}. Specifically, it determines the actions of the MS and WD in time slot $t$ by solving the following problem:
\begin{subequations}\label{Prob_Per_slot}
\begin{align}
\underset{\substack{r_t,p_t^{\rm u}\\ f_t^{\rm u},f_t^{\rm s}}} \max~&Vr_t \!+\! \!Z_t\lambda_{\rm c}\left(c_{\rm th}\!-\!e_t^{\rm edg}\right)\!+\!\lambda_{\rm e}(B_t\!-\!\Omega)(e_t^{\rm u}\!-\!e_t^{\rm h})+\!Q_t^{\rm U}\left(l_t^{\rm off}+l_t^{\rm loc}\!-\!r_t\right)+\!Q_t^{\rm S}(l_t^{\rm edg}\!-\!l_t^{\rm off})\label{Prob_Per_slot_obj}\\
\st
~~& a_t\left(p_{t}^{\rm u}\bar{h}_t+W\delta_{\rm p}^2-\Gamma_{\rm th}\right)\leq 0, \label{Prob_Per_slot_cons1}\\
~~& l_t^{\rm off}+l_t^{\rm loc} \leq Q_t^{\rm U},\\
~~& 0\leq f_t^{\rm s}T/C\leq Q_t^{\rm S}, \label{Prob_Per_slot_cons3}\\
~~& 0 \!\leq \!r_{t} \!\leq \!r_{\rm max}, 0 \!\leq \!p_{t}^{\rm u} \!\leq \!p_{\rm max}, \!0 \leq\! f_t^{\rm u} \!\leq\! f_{\rm max}^{\rm u}, 0 \!\leq \!f_t^{\rm s}\! \leq \!f_{\rm max}^{\rm s}.
\end{align}
\end{subequations}
Comparing with (P1), we remove the energy causality constraint \eqref{EH_caus} in the per-slot problem \eqref{Prob_Per_slot}. In Section \ref{secV}, we show that \eqref{EH_caus} can always be satisfied when implementing PLySE given that the battery capacity satisfies a mild condition. A close observation shows that \eqref{Prob_Per_slot} can be decomposed into three independent subproblems, which correspond to CPU frequency control at the MS, data sensing control at the WD, and task execution control at the WD, respectively. These three subproblems can be solved in parallel as follows.
\subsection{Optimal Edge CPU Frequency}
The optimal CPU frequency at the MS can be obtained by solving the following convex optimization problem:
\begin{subequations}\label{Subprob_fs}
	\begin{align}
	\underset{\substack{f_t^{\rm s}}} \max~& - Z_t\lambda_{\rm c} \kappa_{\rm e}\left(f_t^{\rm s}\right)^3T + Q_t^{\rm S}f_t^{\rm s}T/C \label{Subprob_fs_obj}\\
	\st
	~~& 0 \leq f_t^{\rm s} \leq \bar{f}_{\rm max}^{\rm s},
	\end{align}
\end{subequations}
where $\bar{f}_{\rm max}^{\rm s}=\min(Q_t^{\rm S}C/T, f_{\rm max}^{\rm s})$, which is obtained by absorbing \eqref{Prob_Per_slot_cons3} into the box constraint $0 \!\leq \!f_t^{\rm s}\! \leq \!f_{\rm max}^{\rm s}$. The solution of \eqref{Subprob_fs} can be easily obtained as
\begin{equation}\label{Opt_fs}
f_t^{\rm s\ast}=\min\left(\sqrt{\frac{Q_t^{\rm S}}{3Z_t\lambda_{\rm c} C\kappa_{\rm e}}},\bar{f}_{\rm max}^{\rm s}\right).
\end{equation}
As shown in \eqref{Opt_fs}, the MS operates at a high CPU frequency when the data queue length $Q_t^{\rm S}$ is large, and slows down when the power deficit queue $Z_t$ is large. Such an operation stabilizes the data queue $Q_t^{\rm S}$ and satisfies the long-term energy budget at the MS.

\subsection{Optimal Data Sensing Rate}
The optimal task data size collected in time slot $t$ can be obtained by solving the following linear programming:
\begin{equation}
\underset{\substack{0\leq r_t\leq r_{\rm max}}}\max\left[V+\lambda_{\rm e}(B_t\!-\!\Omega)e_{\rm unit}^{\rm col}-Q_t^{\rm U}\right]r_t,
\end{equation}
where the optimal solution exhibits a simple ON-OFF structure:
\begin{equation}
r^{\ast}_t = r_{\rm max}\cdot \mathbbm{1}_{C_{\rm sen}\leq 0}, \forall t, \label{Opt_r}
\end{equation}
where $C_{\rm sen}\triangleq Q_t^{\rm U}-V-\lambda_{\rm e}(B_t\!-\!\Omega)e_{\rm unit}^{\rm col}$. Specifically, the WD senses task data at the maximum rate (i.e., $r_t=r_{\rm max}$) if $C_{\rm sen}\leq 0$, and collects zero-bit data otherwise. Because $C_{\rm sen}$ increases with $Q_t^{\rm U}$ and decreases with $B_t$, the WD reduces sensing activity when $Q_t^{\rm U}$ is large or $B_t$ is small, thus avoiding continuous data queue backlog and energy draining at the WD.

\subsection{Optimal Task Execution}
The remaining sub-problem optimizes the task execution, including local computing and task offloading control at the WD. After removing the terms that are only related to $f_t^{\rm s}$ and $r_t$ in \eqref{Prob_Per_slot}, we solve the following optimization problem:
\begin{subequations}\label{Prob_Per_slot_sim}
	\begin{align}
	\underset{\substack{p_t^{\rm u}, f_t^{\rm u}}} \max~&F(f_t^{\rm u}) + G(p_t^{\rm u}) \label{Prob_Per_slot_sim_redc_obj}\\
	\st
	~~&l_t^{\rm off}+l_t^{\rm loc} \leq Q_t^{\rm U}, \label{Prob_Per_slot_sim_cons_data}\\
	~~& 0 \!\leq \!p_{t}^{\rm u} \!\leq \!p_{\rm th}, ~~0 \leq\! f_t^{\rm u} \!\leq\! f_{\rm max}^{\rm u}. \label{Prob_Per_slot_sim_cons_pf}
	\end{align}
\end{subequations}
Here, $p_{\rm th}=a_t\min\left(\frac{\Gamma_{\rm th}-W\delta_{\rm p}^2}{\bar{h}_t}, p_{\rm max}\right) +(1-a_t)p_{\rm max}$, which is obtained by absorbing $p_{t}^{\rm u}\leq p_{\rm max}$ into \eqref{Prob_Per_slot_cons1}. Besides, the objective of \eqref{Prob_Per_slot_sim} is detailed as,
\begin{equation}\label{Eq_F}
F(f_t^{\rm u})=\lambda_{\rm e}\tilde{B}_t\kappa_{\rm c}\left(f_t^{\rm u}\right)^3T\!+\!Q_t^{\rm U}\frac{f_t^{\rm u}T}{C},
\end{equation}
\begin{equation}\label{Eq_G}
G(p_t^{\rm u})=\lambda_{\rm e}\tilde{B}_tp_{t}^{\rm u}T\!+\!\left(Q_t^{\rm U}-Q_t^{\rm S}\right)TW\log_2\left(1+p_{t}^{\rm u}\gamma_t\right).
\end{equation}

\begin{Remark}
	In \eqref{Prob_Per_slot_sim}, the proposed PLySE method optimizes a weighted summation of energy cost and data processing rate (see \eqref{Eq_F} and \eqref{Eq_G}) at the WD. At a low battery level, the weighting factor of energy cost (i.e., $\tilde{B}_t$) has a large absolute value $|\tilde{B}_t|$. In this case, the WD prefers energy conservation to data processing (i.e., task offloading and local computation). On the contrary, when the battery level is high (i.e., $|\tilde{B}_t|$ is small), the WD tends to utilize the harvested energy for data processing. This yields a closed loop control on battery level which improves the energy efficiency at the WD. 
\end{Remark}

Let $\mathcal{F}_p(x) \triangleq \frac{1}{\gamma_t}2^{\frac{1}{WT}\left(Q_t^{\rm U}-\frac{xT}{C}\right)}-\frac{1}{\gamma_t}$ and $\mathcal{F}_f(x) \triangleq \frac{\left[Q_t^{\rm U}-WT\log_2\left(1+x\gamma_t\right)\right]C}{T}$. From \eqref{Prob_Per_slot_sim_cons_data} and \eqref{Prob_Per_slot_sim_cons_pf}, we can equivalently express the feasible region of \eqref{Prob_Per_slot_sim} as $p_t^{\rm u}\in[0, \bar{p}_{\rm th}]$ and $f_t^{\rm u}\in[0, \bar{f}_{\rm th}]$, where $\bar{p}_{\rm th}=\min\left(p_{\rm th}, \mathcal{F}_p(0)\right)$, and $\bar{f}_{\rm th}=\min\left(f_{\rm max}^{\rm u}, \mathcal{F}_f(0)\right)$. Due to constraint \eqref{Prob_Per_slot_sim_cons_data} and time-varying coefficient $\left(Q_t^{\rm U}-Q_t^{\rm S}\right)$ in $G(p_t^{\rm u})$, \eqref{Prob_Per_slot_sim} is generally a non-convex optimization problem. In the following Lemma \ref{lem_opt_solution}, we derive the closed-form expression of optimal solution of \eqref{Prob_Per_slot_sim}.

\begin{prop}\label{lem_opt_solution}
	The optimal solution of \eqref{Prob_Per_slot_sim} is
	\begin{subnumcases}{\left\{f_t^{\rm u\ast},p_t^{\rm u\ast}\right\} =\label{Opt_pu_fu}}
	\left\{\hat{f}_t^{\rm u}, \hat{p}_t^{\rm u}\right\}, \!& \text{if $Q_t^{\rm U}-Q_t^{\rm S}\geq0$ and $\hat{l}_{\rm max}^{\rm off}+\hat{l}_{\rm max}^{\rm loc} \leq Q_t^{\rm U}$}, \label{Opt_pu_fu_3}\\
	\left\{\hat{f}_t^{\rm u}, \mathcal{F}_p\left(\hat{f}_t^{\rm u}\right)\right\}, \!& \text{if $Q_t^{\rm U}-Q_t^{\rm S}\geq0$ and $\hat{l}_{\rm max}^{\rm off}+\hat{l}_{\rm max}^{\rm loc} > Q_t^{\rm U}$ and $\tilde{B}_t=0$}, \label{Opt_pu_fu_2}\\
	\left\{\breve{f}_t^{\rm u}, \mathcal{F}_p\left(\breve{f}_t^{\rm u}\right)\right\}, \!& \text{if $Q_t^{\rm U}-Q_t^{\rm S}\geq0$ and $\hat{l}_{\rm max}^{\rm off}+\hat{l}_{\rm max}^{\rm loc} > Q_t^{\rm U}$ and $\tilde{B}_t<0$}, \label{Opt_pu_fu_4}\\
	\left\{\hat{f}_t^{\rm u}, 0\right\}, \!& \text{if $Q_t^{\rm U}-Q_t^{\rm S}<0$}.\label{Opt_pu_fu_5}
	\end{subnumcases}
	Here, $\hat{f}_t^{\rm u}\!=\!\min\left(\sqrt{\frac{-Q_t^{\rm u}}{3\lambda_{\rm e}\tilde{B}_t\kappa_{\rm c}C}}, \bar{f}_{\rm th}\right)$ and $\hat{p}_t^{\rm u}\!\!=\!\!\left[\frac{\left(Q_t^{\rm S}-Q_t^{\rm U}\right)W}{\lambda_{\rm e}\tilde{B}_t\ln2}\!-\!\frac{1}{\gamma_t}\right]_0^{\bar{p}_{\rm th}}$, with $[\cdot]_x^y=\min(\max(\cdot,x),y)$. $\breve{f}_t^{\rm u}\!=\!\left[\bar{f}_t^{\rm u}\right]_{f_{\rm I}^{\rm lb}}^{f_{\rm I}^{\rm ub}}$, with $f_{\rm I}^{\rm ub} = \hat{f}_t^{\rm u}$, $f_{\rm I}^{\rm lb} = \max\left(0, \mathcal{F}_f\left(\hat{p}_t^{\rm u}\right)\right)$, and $\bar{f}_t^{\rm u}\in[0,+\infty)$ is the unique solution of 
	\begin{equation}\label{EqLemCase3}
	U^\prime(f_t^{\rm u}) = 3\lambda_{\rm e}\kappa_{\rm c}T\tilde{B}_t\left(f_t^{\rm u}\right)^2 - \frac{\lambda_{\rm e}\tilde{B}_tT\ln 2}{WC\gamma_t}2^{\frac{Q_t^{\rm U}}{WT}-\frac{f_t^{\rm u}}{WC}} + \frac{T}{C}Q_t^{\rm S} = 0.
	\end{equation}
	In particular, $U^\prime(f_t^{\rm u})$ is a monotonically decreasing function of $f_t^{\rm u}$, and thus $\bar{f}_t^{\rm u}$ can be obtained via bisection search.
\end{prop}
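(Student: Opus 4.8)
The plan is to exploit two structural facts. First, the perturbed weight $\tilde B_t = B_t-\Omega$ is always nonpositive, since the battery update \eqref{E_evol} caps $B_t$ at $\Omega$; hence the energy-penalty coefficients $\lambda_{\rm e}\tilde B_t$ in \eqref{Eq_F} and \eqref{Eq_G} are $\le 0$. Second, although \eqref{Prob_Per_slot_sim} is non-convex in the physical variables $(p_t^{\rm u},f_t^{\rm u})$, it becomes a \emph{separable concave} program once the decision is re-expressed through the processed data volumes $l_t^{\rm loc}=f_t^{\rm u}T/C$ and $l_t^{\rm off}=WT\log_2(1+p_t^{\rm u}\gamma_t)$, in which the coupling constraint \eqref{Prob_Per_slot_sim_cons_data} is merely linear. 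Concretely, $F$ obeys $F''(f_t^{\rm u})=6\lambda_{\rm e}\tilde B_t\kappa_{\rm c}Tf_t^{\rm u}\le 0$, so it is concave on $[0,\infty)$ and maximized over $[0,\bar f_{\rm th}]$ at $\hat f_t^{\rm u}$; and rewriting $G$ as a function of $l_t^{\rm off}$ shows it is concave in $l_t^{\rm off}$ regardless of the sign of $Q_t^{\rm U}-Q_t^{\rm S}$. I would first record these concavity facts together with the observation that, dropping the coupling, $F$ and $G$ are separately maximized over the box at $\hat f_t^{\rm u}$ and $\hat p_t^{\rm u}$.

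I would then dispose of the case $Q_t^{\rm U}-Q_t^{\rm S}<0$ yielding \eqref{Opt_pu_fu_5}. Here $G(p_t^{\rm u})$ is the sum of the nonincreasing term $\lambda_{\rm e}\tilde B_t p_t^{\rm u}T$ and the strictly decreasing term $(Q_t^{\rm U}-Q_t^{\rm S})WT\log_2(1+p_t^{\rm u}\gamma_t)$, so $G$ is strictly decreasing and the optimum sets $p_t^{\rm u\ast}=0$. The problem then collapses to maximizing the concave $F$ over $f_t^{\rm u}\in[0,\bar f_{\rm th}]$, whose maximizer is $\hat f_t^{\rm u}$, giving $\{\hat f_t^{\rm u},0\}$.

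For $Q_t^{\rm U}-Q_t^{\rm S}\ge 0$ both $F$ and $G$ are concave. I would first solve the relaxation that drops \eqref{Prob_Per_slot_sim_cons_data}; by separability its optimum is $(\hat f_t^{\rm u},\hat p_t^{\rm u})$. If this point already satisfies \eqref{Prob_Per_slot_sim_cons_data}, i.e. $\hat l_{\rm max}^{\rm off}+\hat l_{\rm max}^{\rm loc}\le Q_t^{\rm U}$, then since the relaxed feasible set contains that of \eqref{Prob_Per_slot_sim} it is globally optimal, which is case \eqref{Opt_pu_fu_3}. Otherwise the relaxed optimum is infeasible, and because the objective is concave over the convex $(l_t^{\rm loc},l_t^{\rm off})$ polytope, \eqref{Prob_Per_slot_sim_cons_data} must be active, i.e. $l_t^{\rm off}+l_t^{\rm loc}=Q_t^{\rm U}$. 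On this boundary I would substitute $p_t^{\rm u}=\mathcal F_p(f_t^{\rm u})$ and optimize the single-variable $U(f_t^{\rm u})=F(f_t^{\rm u})+G(\mathcal F_p(f_t^{\rm u}))$, whose derivative is exactly \eqref{EqLemCase3}. A short water-filling argument---if a coordinate exceeded its own decoupled maximizer its marginal gain would be negative while the budget could simultaneously be relaxed---shows $f_t^{\rm u\ast}\le\hat f_t^{\rm u}$ and $p_t^{\rm u\ast}\le\hat p_t^{\rm u}$, which translate into the search interval $[f_{\rm I}^{\rm lb},f_{\rm I}^{\rm ub}]=[\max(0,\mathcal F_f(\hat p_t^{\rm u})),\hat f_t^{\rm u}]$. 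When $\tilde B_t=0$, \eqref{EqLemCase3} reduces to $U^\prime(f_t^{\rm u})=TQ_t^{\rm S}/C\ge 0$, so $U$ is nondecreasing and the optimum is the right endpoint $\hat f_t^{\rm u}$ with $p_t^{\rm u\ast}=\mathcal F_p(\hat f_t^{\rm u})$, giving \eqref{Opt_pu_fu_2}; when $\tilde B_t<0$, differentiating \eqref{EqLemCase3} once more shows $U^\prime$ is strictly decreasing, hence $U$ is concave, its stationary point $\bar f_t^{\rm u}$ is unique and bisection-computable, and projecting onto the interval yields $\breve f_t^{\rm u}$ with $p_t^{\rm u\ast}=\mathcal F_p(\breve f_t^{\rm u})$, i.e. \eqref{Opt_pu_fu_4}.

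The main obstacle is the apparent non-convexity of \eqref{Prob_Per_slot_sim} in $(p_t^{\rm u},f_t^{\rm u})$: the sublevel set of the concave map $l_t^{\rm off}+l_t^{\rm loc}$ is not convex, so I cannot invoke convex duality directly in the physical variables. The crux is therefore the reparametrization into $(l_t^{\rm loc},l_t^{\rm off})$, which simultaneously linearizes \eqref{Prob_Per_slot_sim_cons_data} and preserves concavity of the objective, thereby justifying both the ``constraint-active-at-optimum'' step and the water-filling bounds that pin down $[f_{\rm I}^{\rm lb},f_{\rm I}^{\rm ub}]$. The remaining work is bookkeeping: verifying that the box limits $0\le p_t^{\rm u}\le\bar p_{\rm th}$ and $0\le f_t^{\rm u}\le\bar f_{\rm th}$ map cleanly onto the endpoints of the boundary segment, and confirming the strict monotonicity of $U^\prime$ in \eqref{EqLemCase3} that underlies the uniqueness of $\bar f_t^{\rm u}$ and the validity of the bisection search.
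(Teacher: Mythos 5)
Your proposal follows essentially the same route as the paper's Appendix~B proof: the same case split on the sign of $Q_t^{\rm U}-Q_t^{\rm S}$, solving the decoupled relaxation over the box to obtain $(\hat{f}_t^{\rm u},\hat{p}_t^{\rm u})$, substituting $p_t^{\rm u}=\mathcal{F}_p(f_t^{\rm u})$ when that point violates the data-causality constraint, and treating $\tilde{B}_t=0$ (linear $U$, right endpoint) versus $\tilde{B}_t<0$ (concave $U$, unique root of \eqref{EqLemCase3} via strict monotonicity of $U^\prime$, then projection onto $[f_{\rm I}^{\rm lb},f_{\rm I}^{\rm ub}]$) exactly as the paper does. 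Your reparametrization into $(l_t^{\rm loc},l_t^{\rm off})$, where the objective stays concave and the coupling becomes linear, is a worthwhile refinement rather than a different route: it supplies a rigorous justification for the constraint-activity claim ``$l_t^{\rm off}+l_t^{\rm loc}=Q_t^{\rm U}$ must hold at optimum'' and for the bounds $f_t^{\rm u\ast}\le\hat{f}_t^{\rm u}$, $p_t^{\rm u\ast}\le\hat{p}_t^{\rm u}$ defining the search interval, both of which the paper asserts without proof.
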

\begin{proof}
	Please refer to Appendix \ref{App_Opt_solution} for detail.
\end{proof}

\begin{Remark}	
	From Proposition \ref{lem_opt_solution}, we see that the optimal edge computation control solutions are directly affected by the current available energy $B_t$ (absorbed in $\tilde{B}_t$), time-varying data queue length $\{Q_t^{\rm U}, Q_t^{\rm S}\}$ and primary link activity $a_t$, detailed as following: a) The local CPU frequency $\hat{f}_t^{\rm u}$ and transmit power $\hat{p}_t^{\rm u}$ increase with the current available energy $B_t$. b) A larger data queue $Q_t^{\rm U}$ yields a higher local CPU frequency $\hat{f}_t^{\rm u}$. c) The optimal offloading solution follows a threshold-based structure: the WD offloads to the MS only when the local data queue length is longer than that in the edge, i.e., $Q_t^{\rm U}-Q_t^{\rm S}\geq 0$, otherwise it only performs local computation. Besides, the larger the difference of $Q_t^{\rm U}-Q_t^{\rm S}$, the higher the transmit power $\hat{p}_t^{\rm u}$ at the WD. d) If the primary link is active in the $t$th slot, the maximum allowable transmit power at the WD $p_{\rm th}$ monotonically decreases with the interference threshold $\Gamma_{\rm th}$. e) A larger active probability $\bar{a}$ imposes a stringent transmit power constraint to $p_{\rm th}$ in more time slots, which eventually reduces the task offloading rate in the long-term. In contrast, since the CPU frequency lower bound $f_{\rm I}^{\rm lb} = \max\left(0, \mathcal{F}_f\left(\hat{p}_t^{\rm u}\right)\right)$ decreases with $\hat{p}_t^{\rm u}$, a larger $\bar{a}$ would yield a higher local CPU frequency $\breve{f}_t^{\rm u}$. Overall, the PLySE algorithm tends to stabilize both $Q_t^{\rm U}$ and $Q_t^{\rm S}$, and satisfy the QoS requirements of primary link.
\end{Remark}

\section{Performance Analysis}\label{secV}
In this section, we analyze the performance of the proposed PlySE algorithm. Recall that we have removed the energy causality constraint \eqref{EH_caus} in problem (P1) when designing the PLySE algorithm. Here, we first show that \eqref{EH_caus} is always satisfied when implementing the online PLySE algorithm given that the battery capacity satisfies a mild condition. Then, we prove that the PLySE algorithm also satisfies all the long-term performance constraints, thus producing a feasible solution to (P1), and achieves an $\left[O(1/V),O(V)\right]$ sensing-delay tradeoff by tuning the Lyapunov parameter $V$. 

To start with, we derive in the following Lemma \ref{lem_ub} an upper bound for data queue $Q_t^{\rm U}$, which is useful to determine the threshold of battery capacity $\Omega$.
\begin{Lemma}\label{lem_ub}
	For a non-negative parameter $V$ and an initial data queue satisfying $Q_0^{\rm U}\in\left[0,Q_{\rm max}\right]$, where $Q_{\rm max} = V+r_{\rm max}$, we have $0\leq Q_t^{\rm U}\leq Q_{\rm max}$, for $t=0,1,2,\cdots$
\end{Lemma}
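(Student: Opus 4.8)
The plan is to prove both inequalities simultaneously by induction on $t$. The base case $t=0$ holds by the hypothesis $Q_0^{\rm U}\in[0,Q_{\rm max}]$. Assuming $0\leq Q_t^{\rm U}\leq Q_{\rm max}$, I would establish the same bounds for $Q_{t+1}^{\rm U}$ using the queue update \eqref{Data_queue_simp}, the data causality constraint \eqref{ledge_off_cons}, and the threshold form of the optimal sensing rule \eqref{Opt_r}. The nonnegativity and the upper bound are handled separately.

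For the lower bound, I would invoke the data causality constraint \eqref{ledge_off_cons}, which guarantees $l_t^{\rm off}+l_t^{\rm loc}\leq Q_t^{\rm U}$. Substituting into \eqref{Data_queue_simp} yields $Q_{t+1}^{\rm U}=Q_t^{\rm U}-(l_t^{\rm off}+l_t^{\rm loc})+r_t\geq r_t\geq 0$, so nonnegativity is immediate and needs no case analysis.

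The upper bound is the crux and hinges on the ON-OFF structure \eqref{Opt_r}. The key observation is that the battery-cap update \eqref{E_evol} enforces $B_t\leq\Omega$, hence $\tilde{B}_t=B_t-\Omega\leq 0$, which makes the term $-\lambda_{\rm e}(B_t-\Omega)e_{\rm unit}^{\rm col}$ appearing in $C_{\rm sen}$ nonnegative. Consequently, whenever sensing is triggered (i.e. $C_{\rm sen}\leq 0$, so $r_t=r_{\rm max}$), one must have $Q_t^{\rm U}\leq V+\lambda_{\rm e}(B_t-\Omega)e_{\rm unit}^{\rm col}\leq V$. I would then split into two cases. If $r_t=r_{\rm max}$, the pre-sensing backlog is at most $V$, so dropping the nonnegative processing terms gives $Q_{t+1}^{\rm U}\leq Q_t^{\rm U}+r_{\rm max}\leq V+r_{\rm max}=Q_{\rm max}$. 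If $r_t=0$, then $Q_{t+1}^{\rm U}\leq Q_t^{\rm U}\leq Q_{\rm max}$ by the inductive hypothesis. In either case $Q_{t+1}^{\rm U}\leq Q_{\rm max}$, closing the induction.

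The main subtlety, rather than a genuine obstacle, is recognizing that the seemingly state-dependent sensing threshold in \eqref{Opt_r} collapses to the clean bound $Q_t^{\rm U}\leq V$ precisely at the instant new data is collected, an implication that rests entirely on the sign of $\tilde{B}_t$ guaranteed by the battery cap. Once this coupling is identified, the remainder is a routine two-case induction with no delicate estimates.
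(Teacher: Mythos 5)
Your proof is correct and takes essentially the same route as the paper: induction on $t$ with two cases keyed to the ON-OFF sensing rule \eqref{Opt_r}, using $\tilde{B}_t = B_t-\Omega\leq 0$ to collapse the threshold to $Q_t^{\rm U}\leq V$ whenever $r_t=r_{\rm max}$. The only difference is cosmetic: you make the lower bound $Q_{t+1}^{\rm U}\geq 0$ explicit via the data causality constraint \eqref{ledge_off_cons}, a step the paper leaves implicit.
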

\begin{proof}
	We prove this result by induction. Notice that $0\leq Q_0^{\rm U}\leq Q_{\rm max}$ holds initially. In the following, we assume that $0\leq Q_t^{\rm U}\leq Q_{\rm max}$ holds in time slot $t$, and prove that $0\leq Q_{t+1}^{\rm U}\leq Q_{\rm max}$ by considering two cases:
	\begin{itemize}
		\item If the WD does not collect any data in time slot $t$, we obviously have that $Q_{t+1}^{\rm U}\leq Q_{t}^{\rm U} \leq V+r_{\rm max}$;
		\item If the WD collects $r_t$-bit data in time slot $t$, then $r_t=r_{\rm max}$ and $V+\lambda_{\rm e}(B_t\!-\!\Omega)e_{\rm unit}^{\rm col}-Q_t^{\rm U}\geq 0$ according to \eqref{Opt_r}. Therefore, we have that $Q_t^{\rm U} \leq V+\lambda_{\rm e}(B_t\!-\!\Omega)e_{\rm unit}^{\rm col}\leq V$. As a result, we can obtain that $Q_{t+1}^{\rm U}\leq Q_{t}^{\rm U} + r_{\rm max} \leq V+r_{\rm max}$.
	\end{itemize}
	This completes the proof that $0\leq Q_{t+1}^{\rm U}\leq Q_{\rm max}$.
\end{proof}

To facilitate our exposition, we denote $A_1 = \frac{3C\kappa_{\rm c}B_{\rm min}}{\kappa_{\rm e}\left(V+r_{\rm max}\right)T}$, $A_2 = -\frac{3CW\kappa_{\rm c}}{\kappa_{\rm e}\ln 2}$, $A_3 = -\frac{\left(V+r_{\rm max}\right)}{3\lambda_{\rm e}C\kappa_{\rm c}}$, $\bar{A}_1 = -\frac{A_2^2}{3A_1^2}$ and $\bar{A}_2 = \frac{-2A_2^3+27A_1A_3}{27A_1^3}$. The following Proposition \ref{lem3} provides a sufficient condition to remove the energy causality constraint \eqref{EH_caus} when implementing PLySE to solve (P1).

\begin{prop}\label{lem3}
	Suppose that $\Omega\!\geq\!\max\left(\!\frac{V}{\lambda_{\rm e}e_{\rm unit}^{\rm col}}\!+\!\lambda_{\rm e}e_{\rm max}^{\rm u},x_{\rm max}\!+\!\lambda_{\rm e}e_{\rm max}^{\rm u}\!\right)+\lambda_{\rm e}E_{\rm max}^{\rm h}$, where $x_{\rm max}=\max_{k}\left(x_k\right)$ and $x_k = 2\sqrt{-\frac{\bar{A}_1}{3}}\cos\left[\frac{1}{3}\arccos\left(\frac{3\bar{A}_2}{2\bar{A}_1}\sqrt{-\frac{3}{\bar{A}_1}}\right)-\frac{2\pi k}{3}\right], ~\text{for}~k=0,1,2$, the energy causality constraint \eqref{EH_caus} is satisfied in every time slot.
\end{prop}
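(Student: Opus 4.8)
The plan is to show that the energy-causality constraint \eqref{EH_caus}, which was dropped when passing from (P1) to the per-slot problem \eqref{Prob_Per_slot}, is nonetheless honoured by the closed-form PLySE decisions in every slot, so that the online solution is feasible for (P1). The lower inequality $0\le\lambda_{\rm e}e_t^{\rm u}$ is immediate because $e_t^{\rm col}$, $e_t^{\rm off}$ and $e_t^{\rm loc}$ are all nonnegative, so the entire content is the upper inequality $\lambda_{\rm e}e_t^{\rm u}\le B_t\,\mathbbm{1}_{B_t\ge B_{\rm min}}$. First I would invoke Lemma \ref{lem_ub} to fix the uniform bound $0\le Q_t^{\rm U}\le Q_{\rm max}=V+r_{\rm max}$ for all $t$; this is what makes the battery thresholds derived below independent of the slot index and is responsible for the appearance of $V+r_{\rm max}$ inside $A_1$ and $A_3$. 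The strategy is then to prove, by induction over the slots and using the dynamics \eqref{E_evol}, that the battery never enters a regime in which the PLySE consumption would breach \eqref{EH_caus}; concretely, that whenever the WD spends energy the current level $B_t$ is large enough both to cover $\lambda_{\rm e}e_t^{\rm u}$ and to keep $B_t\ge B_{\rm min}$ so that the indicator equals one.

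The induction step is organised by consumption channel, matching the three decoupled subproblems. For \emph{sensing}, \eqref{Opt_r} activates $r_t^{\ast}=r_{\rm max}$ only when $C_{\rm sen}=Q_t^{\rm U}-V-\lambda_{\rm e}(B_t-\Omega)e_{\rm unit}^{\rm col}\le0$; solving for $B_t$ and using $Q_t^{\rm U}\ge0$ gives the clean threshold $B_t\ge\Omega-\frac{V}{\lambda_{\rm e}e_{\rm unit}^{\rm col}}$, i.e. the battery gap $\Omega-B_t$ cannot exceed $\frac{V}{\lambda_{\rm e}e_{\rm unit}^{\rm col}}$ while sensing is on. For \emph{local computing and offloading} I would work directly with Proposition \ref{lem_opt_solution}: writing $x\triangleq\Omega-B_t=-\tilde B_t$, both $\hat f_t^{\rm u}$ and $\hat p_t^{\rm u}$ (and hence $\breve f_t^{\rm u}$) are nonincreasing in $x$ and are clipped by $\bar f_{\rm th}$, $\bar p_{\rm th}$ and the interference-limited $p_{\rm th}$, so the consumed energy $\lambda_{\rm e}(e_t^{\rm loc}+e_t^{\rm off})$ is itself a decreasing function of $x$. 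Substituting the worst case $Q_t^{\rm U}=Q_{\rm max}$ into the requirement that this consumption be covered while leaving the $B_{\rm min}$ safety margin, and clearing the fractional powers coming from $\hat f_t^{\rm u}=\sqrt{-Q_t^{\rm U}/(3\lambda_{\rm e}\tilde B_t\kappa_{\rm c}C)}$ and from the logarithmic rate, reduces the coverage condition to a cubic in $x$ with coefficients $A_1,A_2,A_3$; depressing it yields $\bar A_1,\bar A_2$, its three roots are the $x_k$ in Viète trigonometric form, and the largest admissible gap is $x_{\max}=\max_k x_k$. Thus computing and offloading remain covered precisely while $\Omega-B_t\le x_{\max}$.

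Putting the two channels together, whenever any energy is spent the battery gap is at most $\max\!\left(\frac{V}{\lambda_{\rm e}e_{\rm unit}^{\rm col}},x_{\max}\right)$, so the hypothesis on $\Omega$ gives $B_t\ge\Omega-\max\!\left(\frac{V}{\lambda_{\rm e}e_{\rm unit}^{\rm col}},x_{\max}\right)\ge\lambda_{\rm e}e_{\rm max}^{\rm u}+\lambda_{\rm e}E_{\rm max}^{\rm h}\ge\lambda_{\rm e}e_t^{\rm u}$, which is exactly the causality bound, while the same inequality places $B_t$ above $B_{\rm min}$ so the indicator is one. The final ingredient is the extra $\lambda_{\rm e}E_{\rm max}^{\rm h}$ slack in the threshold on $\Omega$: it absorbs the largest possible arrival $\lambda_{\rm e}e_t^{\rm h}$ and the saturation $\min(\cdot,\Omega)$ in \eqref{E_evol}, guaranteeing that the updated level $B_{t+1}$ returns to the same admissible range and thereby closing the induction.

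I expect the computing/offloading case to be the main obstacle, and everything else to be bookkeeping around it. Sensing is a single linear ON/OFF test, but here $f_t^{\rm u}$ and $p_t^{\rm u}$ are coupled through the shared data-causality constraint \eqref{Prob_Per_slot_sim_cons_data}, they enter the energy cost nonlinearly (cubically through $\kappa_{\rm c}(f_t^{\rm u})^3$ and through $\log_2(1+p_t^{\rm u}\gamma_t)$ via $\hat p_t^{\rm u}$), and the optimal action jumps between the branches \eqref{Opt_pu_fu_3}--\eqref{Opt_pu_fu_5} of Proposition \ref{lem_opt_solution}. The real work is to collapse the coverage requirement across these branches into the single cubic with coefficients $A_1,A_2,A_3$, to identify which branch is binding, and to prove that $x_{\max}=\max_k x_k$ is the correct and attained root; the monotonicity of the consumption in $x$ is what ultimately legitimises using a single gap threshold and lets the induction go through.
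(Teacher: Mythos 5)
There is a genuine gap, and it sits exactly where you predicted the main obstacle would be. Your argument hinges on the claim that PLySE spends compute/offload energy only when the battery gap is small, i.e.\ ``computing and offloading remain covered precisely while $\Omega-B_t\le x_{\max}$,'' so that \emph{whenever} energy is spent one has $B_t\ge \Omega-\max\bigl(\tfrac{V}{\lambda_{\rm e}e_{\rm unit}^{\rm col}},x_{\max}\bigr)\ge \lambda_{\rm e}e_{\rm max}^{\rm u}+\lambda_{\rm e}E_{\rm max}^{\rm h}$. That premise is false: the closed-form actions in Proposition \ref{lem_opt_solution} are strictly positive whenever the queues are, at \emph{any} battery level (e.g.\ $\hat{f}_t^{\rm u}=\min\bigl(\sqrt{-Q_t^{\rm U}/(3\lambda_{\rm e}\tilde{B}_t\kappa_{\rm c}C)},\bar{f}_{\rm th}\bigr)>0$ whenever $Q_t^{\rm U}>0$), since the per-slot problem \eqref{Prob_Per_slot} contains no battery constraint at all. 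So energy is spent at arbitrarily large gaps, your monotonicity observation does not yield a one-sided gap threshold (the budget $B_t=\Omega-x$ and the consumption both decrease in the gap $x$, so ``covered iff $x\le x_{\max}$'' does not follow), and the final chain of inequalities fails precisely in the intermediate regime $B_{\rm min}\le B_t\le \lambda_{\rm e}e_{\rm max}^{\rm u}$. Relatedly, your cubic is in the wrong variable and points the wrong way: in the paper's proof the cubic is in the \emph{fixed} design quantity $x=\Omega-\lambda_{\rm e}e_{\rm max}^{\rm u}$, and coverage requires $x\ge x_{\max}$ (a large $\Omega$ makes $|\tilde{B}_t|$ large throughout the intermediate band, which \emph{suppresses} $\hat{f}_t^{\rm u}$ and $\hat{p}_t^{\rm u}$), not a per-slot condition $\Omega-B_t\le x_{\max}$.

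The missing idea is the role of $B_{\rm min}$ as the coverage margin, organized as a per-slot case split on $B_t$ rather than an induction. The paper partitions $[0,\Omega]$ into three bands: for $B_t\ge \lambda_{\rm e}e_{\rm max}^{\rm u}$ causality is trivial since $\lambda_{\rm e}e_t^{\rm u}\le\lambda_{\rm e}e_{\rm max}^{\rm u}\le B_t$; for $B_t< B_{\rm min}$ the energy-aware management policy built into \eqref{EH_caus} forces zero consumption (this is what rescues the low-battery regime, which your induction cannot reach); and in the middle band $[B_{\rm min},\lambda_{\rm e}e_{\rm max}^{\rm u}]$ one shows (i) sensing is off, because $B_t\le\lambda_{\rm e}e_{\rm max}^{\rm u}$ and $\Omega\ge \tfrac{V}{\lambda_{\rm e}e_{\rm unit}^{\rm col}}+\lambda_{\rm e}e_{\rm max}^{\rm u}$ make $C_{\rm sen}>0$ in \eqref{Opt_r} (your sensing threshold is the contrapositive of this and is fine), and (ii) the compute-plus-offload consumption satisfies $\lambda_{\rm e}\bigl(\hat{e}_t^{\rm off}+\hat{e}_t^{\rm loc}\bigr)\le B_{\rm min}\le B_t$, using $-\tilde{B}_t\ge\Omega-\lambda_{\rm e}e_{\rm max}^{\rm u}$ and the bound $Q_t^{\rm U}\le V+r_{\rm max}$ from Lemma \ref{lem_ub}; this $\le B_{\rm min}$ requirement is what produces $H(\Omega)\le B_{\rm min}$, the cubic $\bar{H}(x)\ge 0$ with coefficients $A_1,A_2,A_3$, and the trigonometric roots $x_k$, with a separate easy sub-case when $\tilde{p}_t^{\rm u}\le 0$. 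Finally, no induction is needed to keep the state admissible: $B_{t+1}\le\Omega$ is automatic from the $\min(\cdot,\Omega)$ in \eqref{E_evol}, and the $\lambda_{\rm e}E_{\rm max}^{\rm h}$ slack just pads the band boundaries; it is not ``closing'' any induction. Your ingredients (Lemma \ref{lem_ub}, the monotonicity in $|\tilde{B}_t|$, the cubic and its roots) are the right ones, but they must be assembled as this three-band verification rather than as a spend-only-above-threshold induction.
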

\begin{proof}
	Please refer to Appendix \ref{app1} for detail.
\end{proof}
Proposition \ref{lem3} shows that the energy causality constraint can be safely removed when implementing the PLySE algorithm to solve (P1), as long as the battery capacity is sufficiently large. Such a condition easily holds in practice. For instance, the condition is $\Omega\geq 137.8$ Joules using the parameters in Section IV, which holds for commercial battery with several thousand Joules capacity. Moreover, Proposition \ref{lem3} shows that the required battery capacity increases with $V$. Together with the fact that a larger $V$ yields a higher data sensing rate (to be shown in Proposition \ref{lem4}), we observe a critical tradeoff between the battery capacity and achievable data sensing rate. That is, the data sensing rate improves with the battery capacity, and can achieve arbitrarily close-to-optimal sensing performance when the battery capacity is sufficiently large.

In the following, we show that the PLySE algorithm can achieve $\left[O(1/V),O(V)\right]$ sensing-delay tradeoff while respecting all the long-term performance constraints. To facilitate the performance analysis, we introduce the following auxiliary problem:
\begin{subequations}\label{Primal_prob_mod}
	\begin{align}
	(\text{P2})~~	\underset{\substack{r_t, p_t^{\rm u}, f_t^{\rm u}, f_t^{\rm s},\forall t}} \max~&\lim_{N\to+\infty}\frac{1}{N}\sum_{t=0}^{N-1}r_t\\
	\st
	~~& \eqref{SNR_thre}, \eqref{ledge_off_cons}, \eqref{DataQ_stab}, \eqref{Bud_cons}, \eqref{Prob_const2},\\
	~~& \lim_{N\to+\infty}\frac{1}{N}\sum_{t=0}^{N-1}\mathbb{E}\left[e_t^{\rm u}-e_t^{\rm h}\right] \leq 0. \label{Energy_lonterm}
	\end{align}
\end{subequations}
Compared to (P1), (P2) replaces the energy causality constraint (11) in (P1) with a long-term energy constraint \eqref{Energy_lonterm}. Let $\bar{R}_{\rm P1}^\ast$ and $\bar{R}_{\rm P2}^\ast$ be the optimal value of (P1) and (P2), respectively. Then, we show in the following lemma that $\bar{R}_{\rm P1}^\ast\leq \bar{R}_{\rm P2}^\ast$.
\begin{Lemma}\label{lem1}
	Any feasible solution to (P1) is also feasible to (P2). Specifically, (P2) is a relaxed version of (P1), i.e., $\bar{R}_{\rm P1}^\ast\leq \bar{R}_{\rm P2}^\ast$.
\end{Lemma}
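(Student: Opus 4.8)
The plan is to show that (P1) and (P2) differ in exactly one constraint---(P1) enforces the per-slot energy causality \eqref{EH_caus}, whereas (P2) instead imposes the long-term energy constraint \eqref{Energy_lonterm}, with all remaining constraints identical---and that \eqref{EH_caus} together with the battery dynamics \eqref{E_evol} automatically implies \eqref{Energy_lonterm}. Once this implication is in hand, the feasible region of (P1) is contained in that of (P2); since both problems share the same objective $\lim_{N\to+\infty}\frac{1}{N}\sum_{t=0}^{N-1}r_t$, enlarging the feasible set can only increase the optimum, giving $\bar{R}_{\rm P1}^\ast\leq \bar{R}_{\rm P2}^\ast$.

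The core of the argument is a telescoping bound on the battery level. First I would drop the $\min$ in \eqref{E_evol} using $\min(x,\Omega)\leq x$ to obtain the per-slot inequality $B_{t+1}\leq B_t-\lambda_{\rm e}e_t^{\rm u}+\lambda_{\rm e}e_t^{\rm h}$, equivalently $\lambda_{\rm e}\left(e_t^{\rm u}-e_t^{\rm h}\right)\leq B_t-B_{t+1}$. Summing over $t=0,1,\dots,N-1$ collapses the right-hand side to $B_0-B_N$, so that $\lambda_{\rm e}\sum_{t=0}^{N-1}\left(e_t^{\rm u}-e_t^{\rm h}\right)\leq B_0-B_N$. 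This bound holds pathwise for any admissible control sequence and requires no knowledge of future system states.

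Next I would invoke the energy-aware battery policy encoded in \eqref{EH_caus} to certify that $B_N\geq 0$ for every $N$: when $B_t\geq B_{\rm min}$ the constraint gives $\lambda_{\rm e}e_t^{\rm u}\leq B_t$, and when $B_t<B_{\rm min}$ the indicator forces $e_t^{\rm u}=0$; in either case $B_t-\lambda_{\rm e}e_t^{\rm u}\geq 0$, and adding the nonnegative harvested term $\lambda_{\rm e}e_t^{\rm h}$ keeps $B_{t+1}\geq 0$. Consequently $B_0-B_N\leq B_0$, a finite constant independent of $N$. Taking expectations of the telescoped inequality, dividing by $\lambda_{\rm e}N$, and letting $N\to+\infty$ sends $\mathbb{E}\left[B_0-B_N\right]/(\lambda_{\rm e}N)\leq B_0/(\lambda_{\rm e}N)$ to zero, which yields exactly $\lim_{N\to+\infty}\frac{1}{N}\sum_{t=0}^{N-1}\mathbb{E}\left[e_t^{\rm u}-e_t^{\rm h}\right]\leq 0$, i.e.\ \eqref{Energy_lonterm}. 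Combined with the fact that every other constraint of (P1) is shared by (P2), this shows any (P1)-feasible solution is (P2)-feasible and completes the relaxation claim.

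I expect the only delicate point to be the verification that $B_N\geq 0$, which hinges on reading the indicator-based battery management rule in \eqref{EH_caus} correctly across both the $B_t\geq B_{\rm min}$ and $B_t<B_{\rm min}$ regimes; dropping the $\min$ and the telescoping step are routine. Everything reduces to a deterministic, pathwise energy-balance inequality, so the passage to expectations and the limit is immediate.
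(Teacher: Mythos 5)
Your proposal is correct and follows essentially the same route as the paper: drop the $\min$ in the battery dynamics \eqref{E_evol}, telescope the resulting per-slot inequality, and pass to expectations and the limit to deduce the long-term constraint \eqref{Energy_lonterm}, whence inclusion of feasible sets gives $\bar{R}_{\rm P1}^\ast\leq \bar{R}_{\rm P2}^\ast$. The only (minor, harmless) difference is in how the battery terms are controlled: the paper invokes $0\leq B_t\leq\Omega<+\infty$ to make $\mathbb{E}[B_N]/(\lambda_{\rm e}N)\to 0$, while you explicitly derive $B_N\geq 0$ from the energy-causality rule \eqref{EH_caus} and bound $B_0-B_N\leq B_0$, which is if anything slightly more careful since it makes the nonnegativity of the battery level explicit rather than implicit.
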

\begin{proof}
	For any feasible solution of (P1), based on the battery dynamics \eqref{E_evol}, we have
	\begin{equation}
	B_{t+1} \leq B_t - \lambda_{\rm e}e_t^{\rm u} + \lambda_{\rm e}e_t^{\rm h}, t = 0, \cdots, N-1.
	\end{equation}
	By summing up both sides of the above $N$ equalities, taking the expectation, diving both sides by $\lambda_{\rm e}N$ and letting $N$ go to infinity, we have
	\begin{equation}
	\lim_{N\!\to\!+\!\infty}\!\frac{1}{\lambda_{\rm e}N}\mathbb{E}\left[B_{N}\right]\!\leq\! \lim_{N\!\to\!+\!\infty}\!\!\frac{1}{\lambda_{\rm e}N}\mathbb{E}\left[B_{0}\right]\! -\! \lim_{N\!\to\!+\!\infty}\!\frac{1}{N}\!\sum_{t=0}^{N-1}\!\mathbb{E}\left[e_t^{\rm u}-e_t^{\rm h}\right].
	\end{equation}
	Since $B_t\leq\Omega<+\infty$, we have $\lim_{N\to+\infty}\frac{1}{\lambda_{\rm e}N}\mathbb{E}\left[B_{N}\right]=\lim_{\lambda_{\rm e}N\to+\infty}\frac{1}{N}\mathbb{E}\left[B_{0}\right]=0$, i.e., \eqref{Energy_lonterm} is satisfied. Hence, any feasible solution of (P1) is also feasible to (P2), and thus $\bar{R}_{\rm P1}^\ast\leq \bar{R}_{\rm P2}^\ast$.
\end{proof}
We denote the environment random event $\omega_t$ of the considered problem as an i.i.d process. We introduce a class of stationary and randomized policies called $\omega$-only policy, which observes $\omega_t$ for each time slot $t$ and makes control decisions independent of the queue backlogs $\Theta_{t}$. To ensure the long-term requirements \eqref{DataQ_stab}, \eqref{Bud_cons} and \eqref{Energy_lonterm}, we assume (P2) is feasible and following Slater condition holds.
\begin{Assumption}
	There are constant $\epsilon>0$ and $\varphi(\epsilon)\leq \bar{R}_{\rm P2}^\ast$ and an $\omega$-only policy $\Gamma$ satisfying that
	\begin{subequations}\label{SLT}
		\setlength{\abovedisplayskip}{-5pt}
		\setlength{\belowdisplayskip}{-5pt}
		\begin{align}
		&\mathbb{E}\left[r_t^{\Gamma}\right] = \varphi(\epsilon),~
		\mathbb{E}\left[e_t^{\rm edg,\Gamma}\right] \leq  c_{\rm th}-\epsilon,~
		\mathbb{E}\left[e_t^{\rm u, \Gamma}-e_t^{\rm h,\Gamma}\right] \leq -\epsilon,\\
		&\mathbb{E}\left[l_t^{\rm{off},\Gamma}\right] \leq\mathbb{E}\left[l_t^{\rm{edg},\Gamma}\right]-\epsilon,~
		\mathbb{E}\left[r_t^{\Gamma}\right] \leq \mathbb{E}\left[l_t^{\rm{off},\Gamma}+l_t^{\rm{loc},\Gamma}\right] -\epsilon.
		\end{align}
	\end{subequations}
\end{Assumption}

In the following Proposition \ref{lem4}, we show that PLySE achieves asymptotic optimality to the primary problem (P1), while satisfying the long-term constraints \eqref{DataQ_stab} and \eqref{Bud_cons}.
\begin{prop}\label{lem4}
	Under the proposed PLySE method, we have that:
	\begin{itemize}
		\item [a)] The achieved time average expected computation rate, denoted as $\bar{R}_{\Psi}$, satisfies that
		\begin{equation}\label{Asy_opt_R}
		\begin{split}
		\bar{R}_{\Psi}\geq \bar{R}_{\rm P1}^\ast - \frac{D}{V}.
		\end{split}
		\end{equation}
		\item[b)] The data queue stability \eqref{DataQ_stab} are guaranteed. In particular, the data queue length $Q_t^{\rm U}$ and $Q_t^{\rm S}$ satisfy that
		\begin{subequations}\label{lem5_eq2}
			\begin{align}
			&\lim_{N\!\to\!+\!\infty}\frac{1}{N}\sum_{t=0}^{N}\mathbb{E}\left[Q_t^{\rm U}\right] \leq \frac{D+V\left(\bar{R}_{\rm P1}^\ast-\varphi(\epsilon)\right)}{\epsilon}<+\infty,\\
			&\lim_{N\!\to\!+\!\infty}\frac{1}{N}\sum_{t=0}^{N}\mathbb{E}\left[Q_t^{\rm S}\right] \leq \frac{D+V\left(\bar{R}_{\rm P1}^\ast-\varphi(\epsilon)\right)}{\epsilon}<+\infty.
			\end{align}
		\end{subequations}
		\item[c)] $Z_t$ is strongly stable, and the long-term average power constraints \eqref{Bud_cons} is satisfied.
	\end{itemize}
\end{prop}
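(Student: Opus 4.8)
The plan is to run the standard drift-plus-penalty argument built on Lemma~\ref{lem6}, comparing PLySE against suitable $\omega$-only policies. The key observation is that, because PLySE solves \eqref{Prob_Per_slot} exactly, for every realization of $\omega_t$ it \emph{maximizes} the bracketed term on the right-hand side of \eqref{LyaDriPlusPen} given the current backlog $\Theta_t$. Hence for any alternative stationary randomized $\omega$-only policy $\Gamma$ the PLySE value of that term is no smaller, so $\Delta_V^t \leq D - \mathbb{E}\{(\cdot)^{\Gamma}\mid\Theta_t\}$, where $(\cdot)^{\Gamma}$ is the bracketed objective under $\Gamma$. Since $\Gamma$ chooses actions as a function of $\omega_t$ only and $\omega_t$ is i.i.d.\ and independent of $\Theta_t$, each conditional expectation factorizes: the backlog coefficients $Z_t,\tilde B_t,Q_t^{\rm U},Q_t^{\rm S}$ come out, and what remains are the unconditional means of $\Gamma$'s per-slot quantities. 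This is the device that converts the per-slot bound into a statement controllable by the constraint slacks.

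For part (a) I would feed into this inequality a near-optimal $\omega$-only policy for (P2): by the standard existence result for i.i.d.\ systems, for every $\delta>0$ there is an $\omega$-only policy whose mean rate exceeds $\bar R_{\rm P2}^\ast-\delta$ while meeting all long-term constraints of (P2). Because that policy satisfies $\mathbb{E}[e_t^{\rm edg}]\le c_{\rm th}$, $\mathbb{E}[e_t^{\rm u}-e_t^{\rm h}]\le 0$, $\mathbb{E}[l_t^{\rm edg}-l_t^{\rm off}]\ge 0$ and $\mathbb{E}[l_t^{\rm off}+l_t^{\rm loc}-r_t]\ge 0$, and since $Z_t,Q_t^{\rm U},Q_t^{\rm S}\ge 0$ while $\tilde B_t=B_t-\Omega\le 0$, every backlog-weighted term is nonnegative and can be dropped, leaving $\Delta_V^t\le D-V(\bar R_{\rm P2}^\ast-\delta)$. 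Writing $\Delta_V^t=\Delta^t-V\mathbb{E}[r_t\mid\Theta_t]$, taking total expectations, summing over $t=0,\dots,N-1$, and telescoping gives
\[
\tfrac1N\sum_{t=0}^{N-1}\mathbb{E}[r_t]\;\ge\;\bar R_{\rm P2}^\ast-\delta-\tfrac{D}{V}-\tfrac{\mathbb{E}[\Phi_0]}{NV},
\]
using $\Phi_N\ge 0$. Letting $N\to\infty$ then $\delta\to0$, and invoking Lemma~\ref{lem1} ($\bar R_{\rm P1}^\ast\le\bar R_{\rm P2}^\ast$), yields $\bar R_{\Psi}\ge\bar R_{\rm P1}^\ast-D/V$, which is \eqref{Asy_opt_R}.

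For part (b) I would instead substitute the Slater policy of Assumption~1, whose slacks in \eqref{SLT} are strict by $\epsilon$. The same backlog-weighted terms are now bounded \emph{below} by $\epsilon(\lambda_{\rm c}Z_t+Q_t^{\rm U}+Q_t^{\rm S})$ (the battery term again has the helpful sign, since $\tilde B_t\le0$ and $\mathbb{E}[e_t^{\rm u}-e_t^{\rm h}]\le-\epsilon$, so it only reinforces the bound and may be discarded), giving $\Delta_V^t\le D-V\varphi(\epsilon)-\epsilon(\lambda_{\rm c}Z_t+Q_t^{\rm U}+Q_t^{\rm S})$. Telescoping as before, discarding $\mathbb{E}[\Phi_N]\ge0$, and using that PLySE is feasible for (P1) by Proposition~\ref{lem3} so that its time-average rate does not exceed $\bar R_{\rm P1}^\ast$, I obtain
\[
\limsup_{N\to\infty}\tfrac1N\sum_{t=0}^{N-1}\mathbb{E}\!\left[\lambda_{\rm c}Z_t+Q_t^{\rm U}+Q_t^{\rm S}\right]\le \frac{D+V(\bar R_{\rm P1}^\ast-\varphi(\epsilon))}{\epsilon}.
\]
Dropping two of the three nonnegative queues in turn yields the individual bounds \eqref{lem5_eq2} and the finiteness required by \eqref{DataQ_stab}, and the same display shows $Z_t$ is strongly stable. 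For the power constraint in part (c) I would then use \eqref{Queue_Z} directly: $Z_{t+1}\ge Z_t+\lambda_{\rm c}(e_t^{\rm edg}-c_{\rm th})$, so summing and telescoping gives $\mathbb{E}[Z_N]-\mathbb{E}[Z_0]\ge\lambda_{\rm c}\sum_{t=0}^{N-1}\mathbb{E}[e_t^{\rm edg}-c_{\rm th}]$; dividing by $\lambda_{\rm c}N$ and using $\mathbb{E}[Z_N]/N\to0$ recovers exactly \eqref{Bud_cons}.

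The step I expect to be the main obstacle is the $\omega$-only comparison in part (a): it hinges on the existence of stationary randomized policies approaching $\bar R_{\rm P2}^\ast$ while respecting all long-term constraints, which must be justified (via the general theory of opportunistic scheduling for i.i.d.\ events), and on carefully tracking the sign of the battery term $\lambda_{\rm e}\tilde B_t(e_t^{\rm u}-e_t^{\rm h})$, where both factors can be nonpositive. The remaining telescoping and limit-taking are routine, but they rely on the feasibility of PLySE (Proposition~\ref{lem3}) to legitimately bound its achieved rate by $\bar R_{\rm P1}^\ast$ in part (b).
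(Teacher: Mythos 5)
Your proposal is correct and follows essentially the same route as the paper's proof in Appendix D: the drift bound of Lemma \ref{lem6}, comparison of PLySE against an $\omega$-only near-optimal policy for (P2) to get part a) and against the Slater policy of Assumption 1 to get part b), telescoping and normalization, and the sample-path argument on $Z_t$ (the paper's Lemma \ref{lem7}) combined with strong stability for part c). The only cosmetic differences are that you invoke exactly-feasible $\omega$-only policies and drop the nonnegative backlog-weighted terms, whereas the paper's Lemma \ref{lem2} permits $\delta$-slack constraint violations and sends $\delta\to 0$, and that you inline the derivation of Lemma \ref{lem7}; note also that both your argument and the paper's share the same mild gloss of bounding PLySE's achieved rate by $\bar{R}_{\rm P1}^\ast$ inside part b) before stability is fully established (harmless, since a preliminary bound via $r_{\rm max}$ closes the loop).
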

\begin{proof}
	Please refer to Appendix \ref{app2} for detail.
\end{proof}

According to Little's law, the network delay is proportional to the time-averaged data queue length. Proposition \ref{lem4} indicates that the PLySE algorithm achieves an $\left[O(1/V),O(V)\right]$ sensing rate-delay tradeoff. In particular, as $V$ increases, the sensing rate improves at the rate of $O(1/V)$, but at the cost of longer data queue length (processing delay) increasing at rate of $O(V)$.

\section{Numerical Results}\label{sec6}
In this section, we evaluate the system performance via numerical simulations. We consider an MS deployed at $d_g = 500$ meters away from the PT. The communication distance between the MS and the WD is $d_h=50$ meters and that between the WD and the PR is $d_{\bar{h}} = 50$ meters. We model all the channels as Rayleigh fading channels. Denote $\sigma_g$, $\sigma_h$, and $\sigma_{\bar{h}}$ as the path-loss exponents of channel MS-PT, MS-WD, and WD-PR, respectively. Then, we model the corresponding channel gain as $g_t=\varsigma_t{H}(d_g,\sigma_g)$, $h_t=\varsigma_t{H}(d_h,\sigma_h)$, and $\bar{h}_t=\varsigma_t{H}(d_{\bar{h}},\sigma_{\bar{h}})$, respectively. Here, $\varsigma_t$ is an independent exponential random variable of unit mean, which captures the small-scale channel fading effect in time slot $t$. ${H}(d,\sigma)$ denotes the average channel gain that follows a path-loss model ${H}(d,\sigma) = G_{\rm A}\left(\frac{3\times10^8}{4\pi f_{\rm c}d}\right)^\sigma$, where $G_{\rm A}=4.11$ captures the total antenna gain and $f_{\rm c}=2.4$ GHz represents the carrier frequency. Unless otherwise statement, we set $\sigma_{g}=\sigma_{h}=\sigma_{\bar{h}} = 2.7$, the power budget at the MS $c_{\rm th}=1.6$ Joules, and the maximum sensing data size $r_{\rm max}=10$ Mbits. We initialize the data queue length and battery level to 0, i.e., $Q_0^{\rm S}=Q_0^{\rm U}=B_0=0$. The energy arrival rate $E_t^{\rm h}$ is uniformly distributed in $\left[0,E_{\rm max}^{\rm h}\right]$ with $E_{\rm max}^{\rm h}=0.6$ Joules. We consider the operation state indicator $a_t$ of the PT follows a binomial distribution with expectation $\bar{a}=0.6$. We refer to $\bar{a}$ as the active rate of PT. Besides, we set the interference tolerance at the PR as $\Gamma_{\rm th}=5^3\times W\delta_{\rm p}^2$. The simulation length is set to $N=6\times10^4$ time slots. The other parameters used in simulation are listed in Table \ref{tab1}.

\begin{table}
	\centering
	\caption{Simulation Parameters}
	\label{tab1}
	\begin{tabular}{|c|c|c|c|}
		\hline
		$P_{\rm B} = 33$ dBm & $p_{\rm max}=20$ dBm& $e_{\rm unit}^{\rm col}=10^{-8}$ Joules/bit & $T = 1$ sec \\
		\hline
		$W = 1$ MHz & $k_{\rm c}=k_{\rm e}=10^{-26}$  & $C = 100$ cycles/bit &  $B_{\rm min} = 10^{-3}$ Joules\\
		\hline
		$\delta_{\rm s}^2 = \delta_{\rm p}^2 = -174$ dBm/Hz  & $f_{\rm s} = 4$ GHz & $f_{\rm max}^{\rm u} = 400$ MHz & $V=256\times10^7$ \\
		\hline	
	\end{tabular}
\end{table}

To verify the performance of the proposed PLySE method, we consider three representative methods as the benchmarks, all of which computes $f_t^{\rm s\ast}$ and $r_t^{\ast}$ similar to PLySE (i.e., using \eqref{Opt_fs} and \eqref{Opt_r}, respectively), while determining $p_t^{\rm u\ast}$ and $f_t^{\rm u\ast}$ via following strategies:
\begin{itemize}
	\item Local computing only (LCO): WD computes all tasks locally rather than task offloading. In this case, the optimal local CPU frequency $f_{t}^{\rm u\ast}$ is obtained by solving \eqref{Prob_Per_slot_sim} with $p_{t}^{\rm u\ast}=0$. Specifically, $f_{t}^{\rm u\ast} = \hat{f}_t^{\rm u}$.
	\item Edge computing only (ECO): WD offloads all tasks to the MS for edge processing. In this case, the optimal transmit power $p_{t}^{\rm u\ast}$ is obtained by solving \eqref{Prob_Per_slot_sim} with $f_{t}^{\rm u \ast}=0$. Specifically, $p_{t}^{\rm u\ast}=\hat{p}_t^{\rm u}$ if $Q_t^{\rm U}-Q_t^{\rm S}\geq0$ or $p_{t}^{\rm u\ast}=0$, otherwise.  
	\item $Q_{\rm S}$-oblivious computation offloading ($Q_{\rm S}$-oblivious): Instead of computing $p_t^{\rm u\ast}$ and $f_t^{\rm u\ast}$ using \eqref{Opt_pu_fu} where $Q_t^{\rm S}$ acts as a key factor,  $Q_{\rm S}$-oblivious method ignores the state of $Q_{\rm S}$ and obtains $p_t^{\rm u\ast}$ and $f_t^{\rm u\ast}$ by solving the following problem:
	\begin{equation}\label{Prob_Myopic}
		\underset{\substack{p_t^{\rm u}, f_t^{\rm u}}} \max~l_t^{\rm loc}+l_t^{\rm off},~~\st~~\eqref{EH_caus}, \eqref{Prob_Per_slot_sim_cons_data}, \eqref{Prob_Per_slot_sim_cons_pf}.
	\end{equation}
	Constrained by \eqref{Prob_Per_slot_sim_cons_data}, \eqref{Prob_Myopic} is a non-convex optimization problem. Nonetheless, we can obtain the optimal solution of \eqref{Prob_Myopic} following similar steps to solve \eqref{Prob_Per_slot_sim}, where the detail is omitted here for concision.
\end{itemize}

\subsection{Feasibility of PLySE and Benchmark Methods}
We first investigates the feasibility of the proposed PLySE algorithm and the three benchmark methods. For convenience, we denote the average energy consumption at the MS as $\bar{c} = \lim_{N\to+\infty}\frac{1}{N}\sum_{t=1}^{N}\mathbb{E}\left[e_t^{\rm edg}\right]$. We consider two different average power budgets at the MS, i.e., $c_{\rm th}=1.6$ and 0.2 Joules, and plot in Fig. \ref{L2_0} the average data queue length $\bar{Q}_{\rm U}$ and $\bar{Q}_{\rm S}$ as well as the average energy consumption at the MS (denoted as $\bar{c}$) as the time proceeds, where each point is a moving-window average of 400 time slots. We also display in the figure the change of battery level $B_t$ over time slot. The results in Fig. \ref{L2_0}(a)-(d) show that all the methods provide feasible solutions to (P1) at a high average power budget $c_{\rm th}=1.6$. In particular, all the methods stabilize the data queues $\bar{Q}_{\rm U}$ and $\bar{Q}_{\rm S}$ in Fig. \ref{L2_0}(a) and (b), respectively, where the proposed PLySE method achieves relative shorter data queues than ECO. Besides, they all satisfy the average energy consumption constraint 1.6 Joules in Fig. \ref{L2_0}(c). We also observe that the battery levels fluctuate between 0 and the battery capacity over time, which means that all the methods respect the energy causality constraint \eqref{EH_caus} in every time slot. However, when we decrease $c_{\rm th}$ from 1.6 to 0.2, $Q_{\rm S}$-oblivious yields an unstable $\bar{Q}_{\rm S}$ (as shown in in Fig. \ref{L2_0}(f)), which increases almost linearly with time. This is because $Q_{\rm S}$-oblivious offloads computing tasks to the MS regardless of the edge queue length. Besides, $Q_{\rm S}$-oblivious does not satisfy the power budget $c_{\rm th}$ at the MS (as shown in in Fig. \ref{L2_0}(g)). In contrast, as shown in Fig. \ref{L2_0}(e)-(h), ECO, LCO, and the proposed PLySE methods still produce stable data queues and satisfy the average energy consumption and energy causality constraints. 

\begin{figure}[tbp]
	\centering
	\includegraphics[scale=0.35]{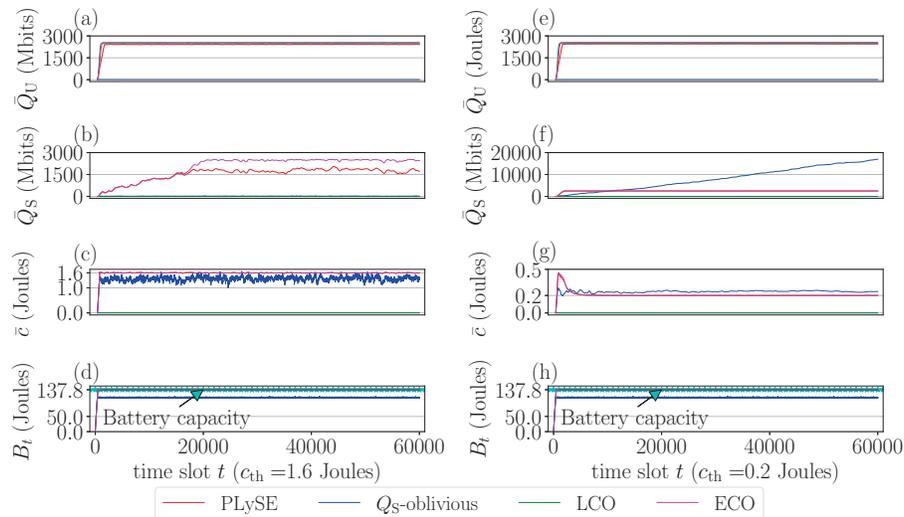}
	\captionsetup{font=footnotesize}
	\caption{Feasibility of PLySE and Benchmark methods.}
	\label{L2_0}
\end{figure}

\subsection{Impact of Lyapunov Control Parameter $V$}
In this subsection, we investigate the impact of parameter $V$ to the performance of PLySE. As shown in Fig. \ref{L1_0}, the data sensing rate $\bar{R}$ increases with $V$ and becomes saturated when $V$ is large enough (i.e., $V\geq 256\times10^7$ here). However, as $V$ increases, $\bar{Q}_{\rm S}$, $\bar{Q}_{\rm U}$ and $\Omega$ become large and grow rapidly especially when $V> 256\times 10^7$. These results are consistent with the theoretical analysis in section \ref{secV}, where a larger $V$ yields a larger battery capacity and longer data queues. By adjusting the value of $V$, PLySE offers a trade-off between the reduction of data queue length and increase of data sensing rate. In the following simulations, we set $V=256\times10^7$, whereby PLySE enjoys near-optimal data sensing rate with small data queue length and battery capacity.

\begin{figure}[tbp]
	\centering
	\includegraphics[scale=0.38]{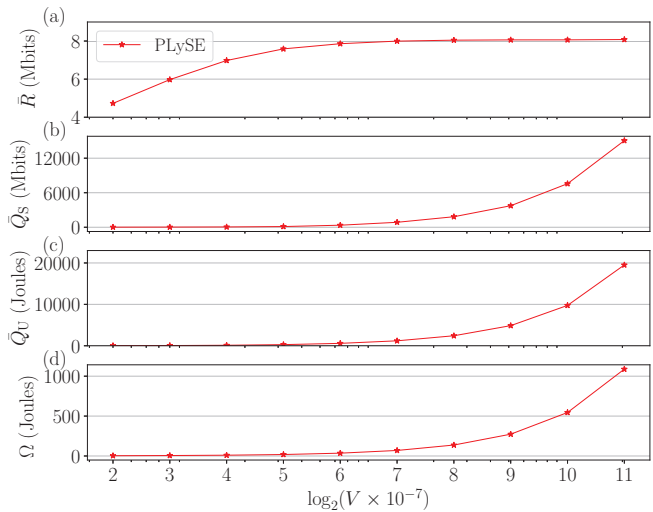}
	\captionsetup{font=footnotesize}
	\caption{System performance of PLySE under different $V$}
	\label{L1_0}
\end{figure}

\subsection{Impact of Primary Link Activities}
In this subsection, we investigate the impact of primary link activities on the data processing performance of PLySE. In Fig. \ref{L4}(a), we reveal the impact of interference tolerance at the PR to the task execution method. As shown in the figure, the local computing data size are almost constant when $\Gamma_{\rm th}$ varies, while the average offloading data size (i.e., the orange and green bars combined) increases with $\Gamma_{\rm th}$. This is because that a larger $\Gamma_{\rm th}$ allows a higher transmit power at the WD. Interestingly, as $\Gamma_{\rm th}$ increases, the WD offloads less task data to the MS when the primary link is idle (i.e., $a_t=0$), but more data when the primary link is active (i.e., $a_t=1$), because the WD has a larger freedom to control its transmit power when the primary link is active to reduce the overall energy consumption.


\begin{figure}
	\begin{minipage}{0.48\linewidth}
		\centerline{\includegraphics[scale=0.58]{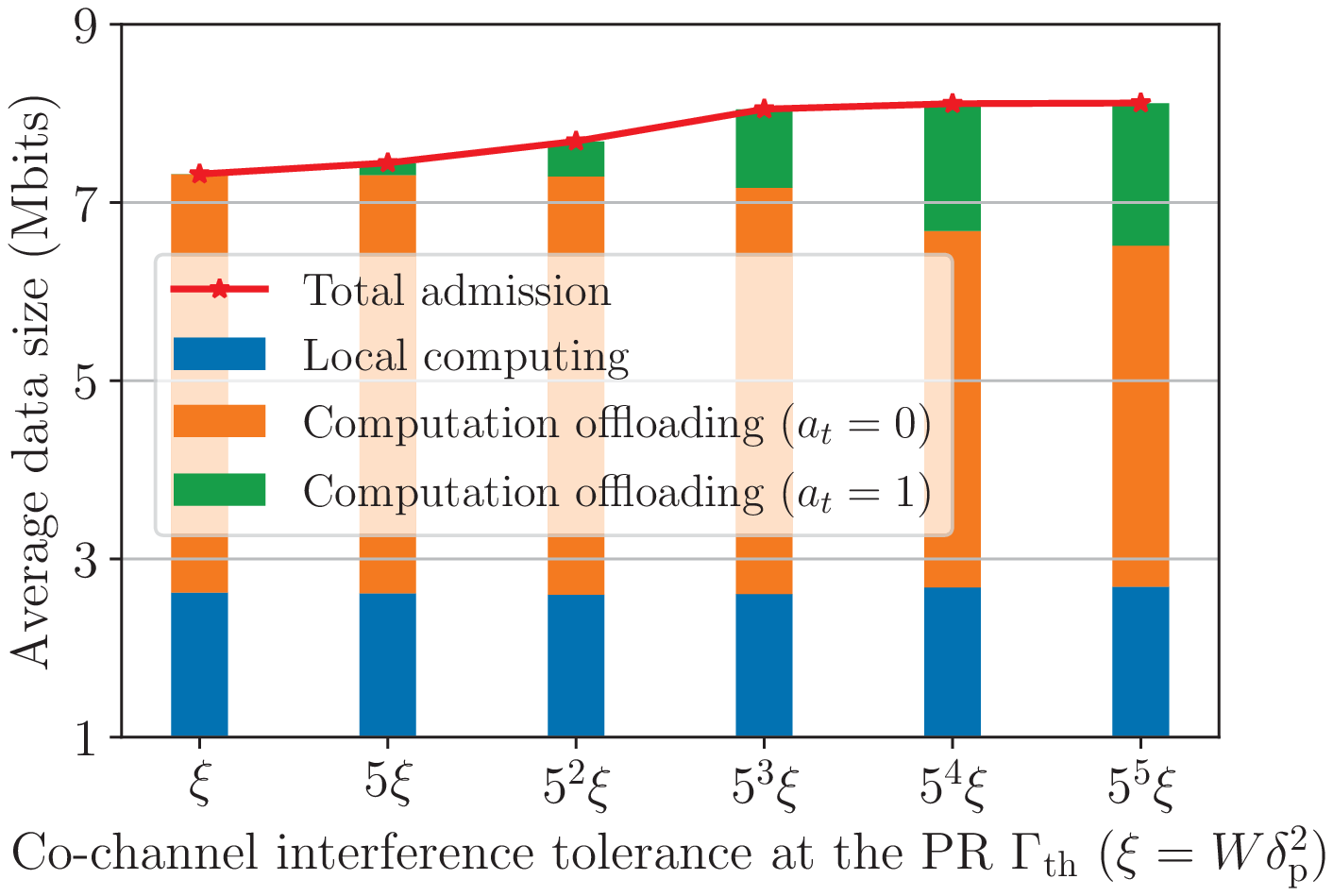}}
		\centerline{(a)}
	\end{minipage}
	\hfill
	\begin{minipage}{0.48\linewidth}
		\centerline{\includegraphics[scale=0.58]{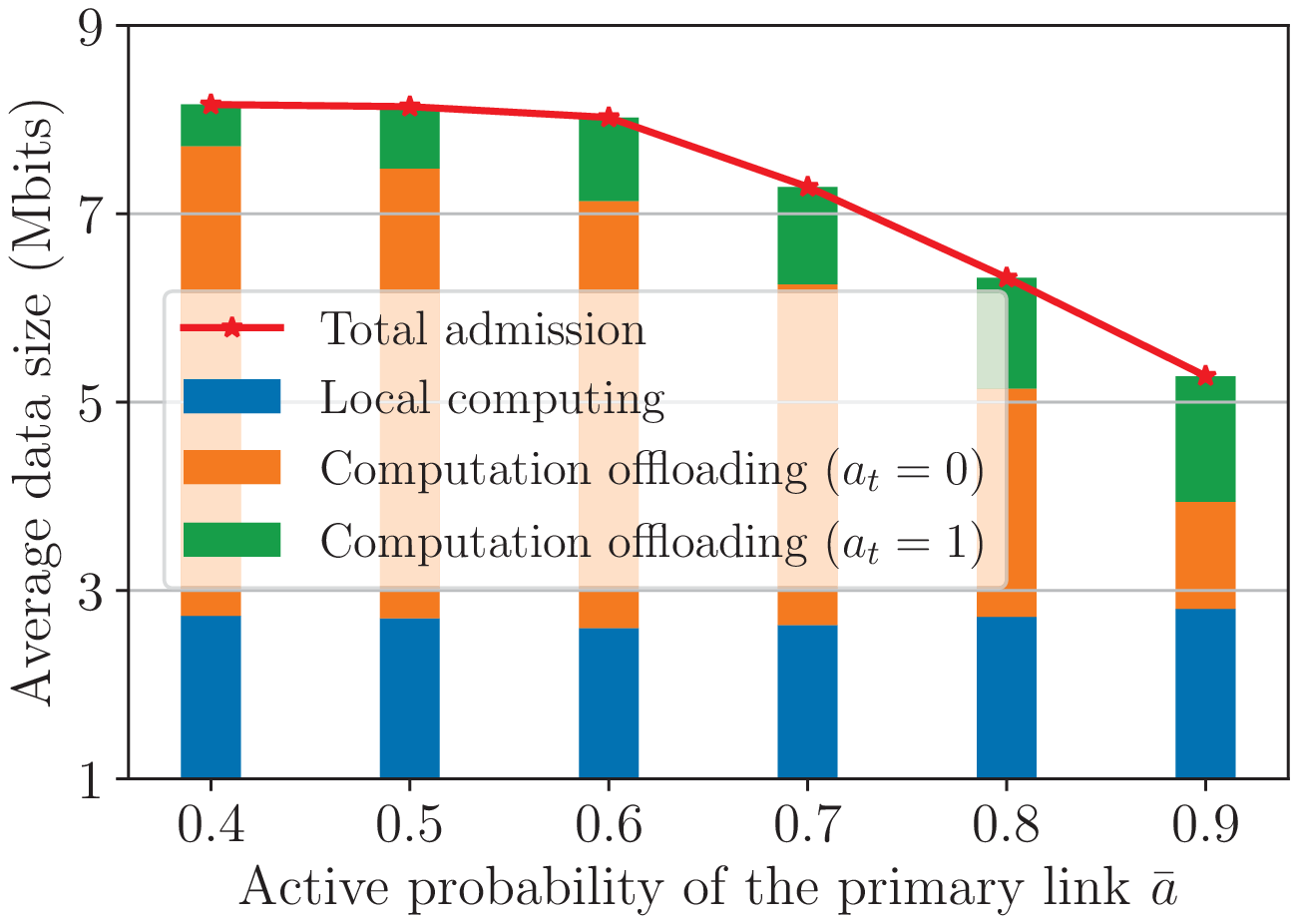}}
		\centerline{(b)}
	\end{minipage}
	\caption{The impact of (a) $\Gamma_{\rm th}$ and (b) $\bar{a}$ on computation offloading and local computing.}
	\label{L4}
\end{figure}

Then, we plot in Fig. \ref{L4}(b) the impact of active probability $\bar{a}$ of primary link on the computation offloading and local computing behavior of the WD. It displays that the local computing data size is almost immune to the change of $\bar{a}$. Besides, since higher $\bar{a}$ leaves the secondary MEC system less chance to access the licensed channel, the average offloading data size and total sensing data size decrease as $\bar{a}$ increases. Meanwhile, because the primary link is active for a larger portion of time under a larger $\bar{a}$, the total task data offloaded when the primary link is active (inactive) increases (decreases) with $\bar{a}$. 

\subsection{Performance Comparison Under Various System Parameters}
In order to show the effectiveness of the proposed PLySE method, we compare PLySE with the three benchmark methods in Fig. \ref{L3} considering various system parameters. For the $Q_{\rm S}$-oblivious method, the points of unstable data queue are omitted.

\begin{figure}
	\begin{minipage}{0.48\linewidth}
		\centerline{\includegraphics[scale=0.54]{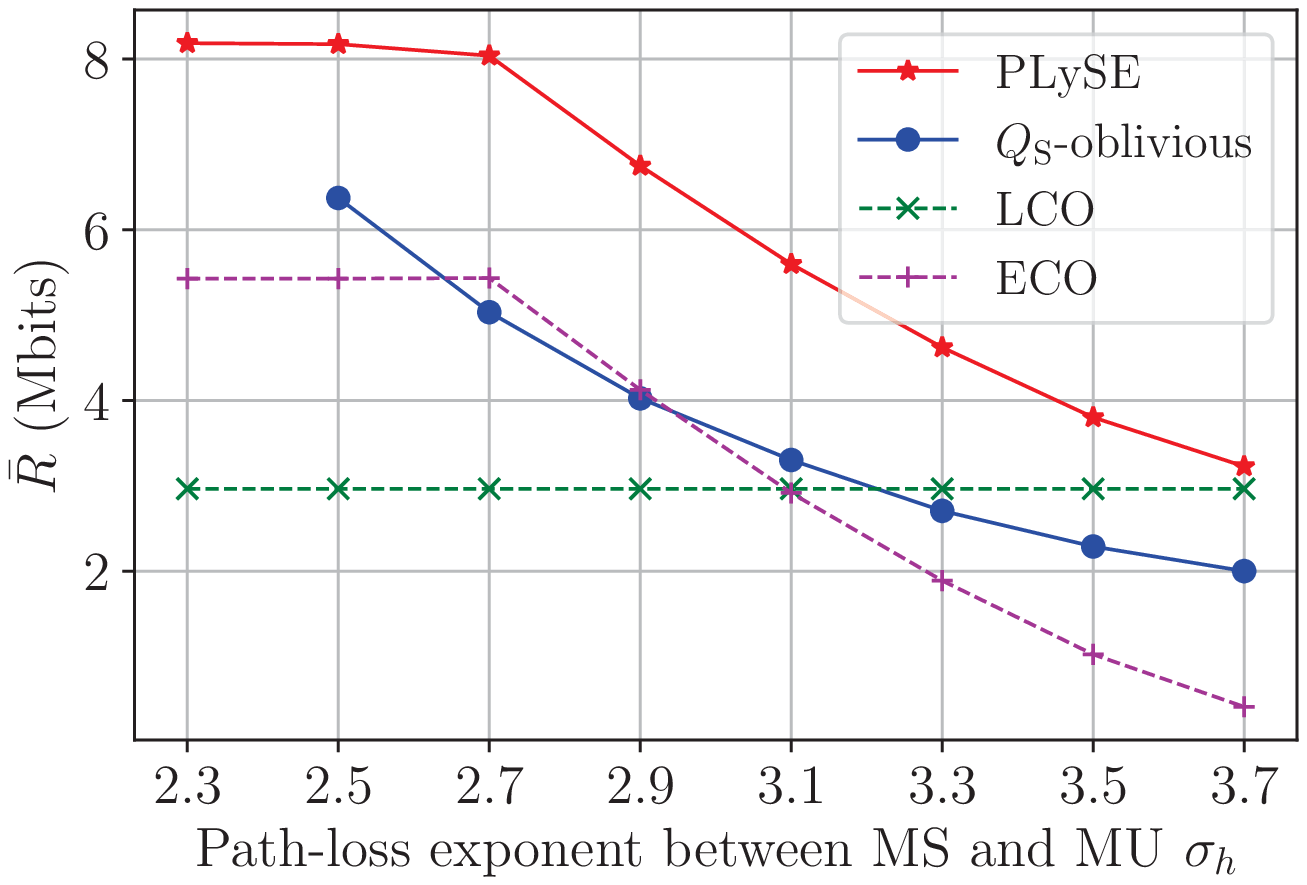}}
		\centerline{(a)}
	\end{minipage}
	\hfill
	\begin{minipage}{.48\linewidth}
		\centerline{\includegraphics[scale=0.54]{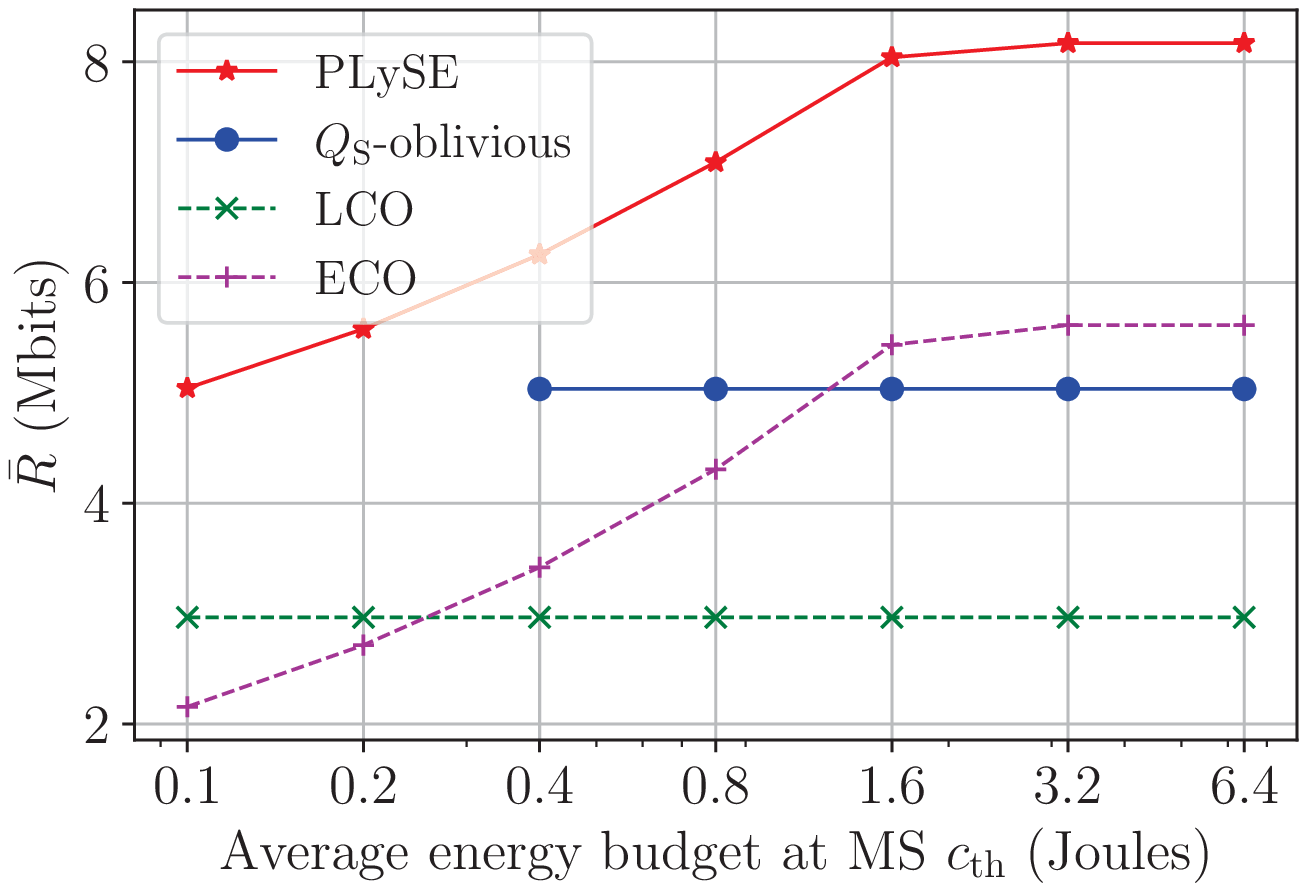}}
		\centerline{(b)}
	\end{minipage}
	\vfill
	\begin{minipage}{0.48\linewidth}
		\centerline{\includegraphics[scale=0.54]{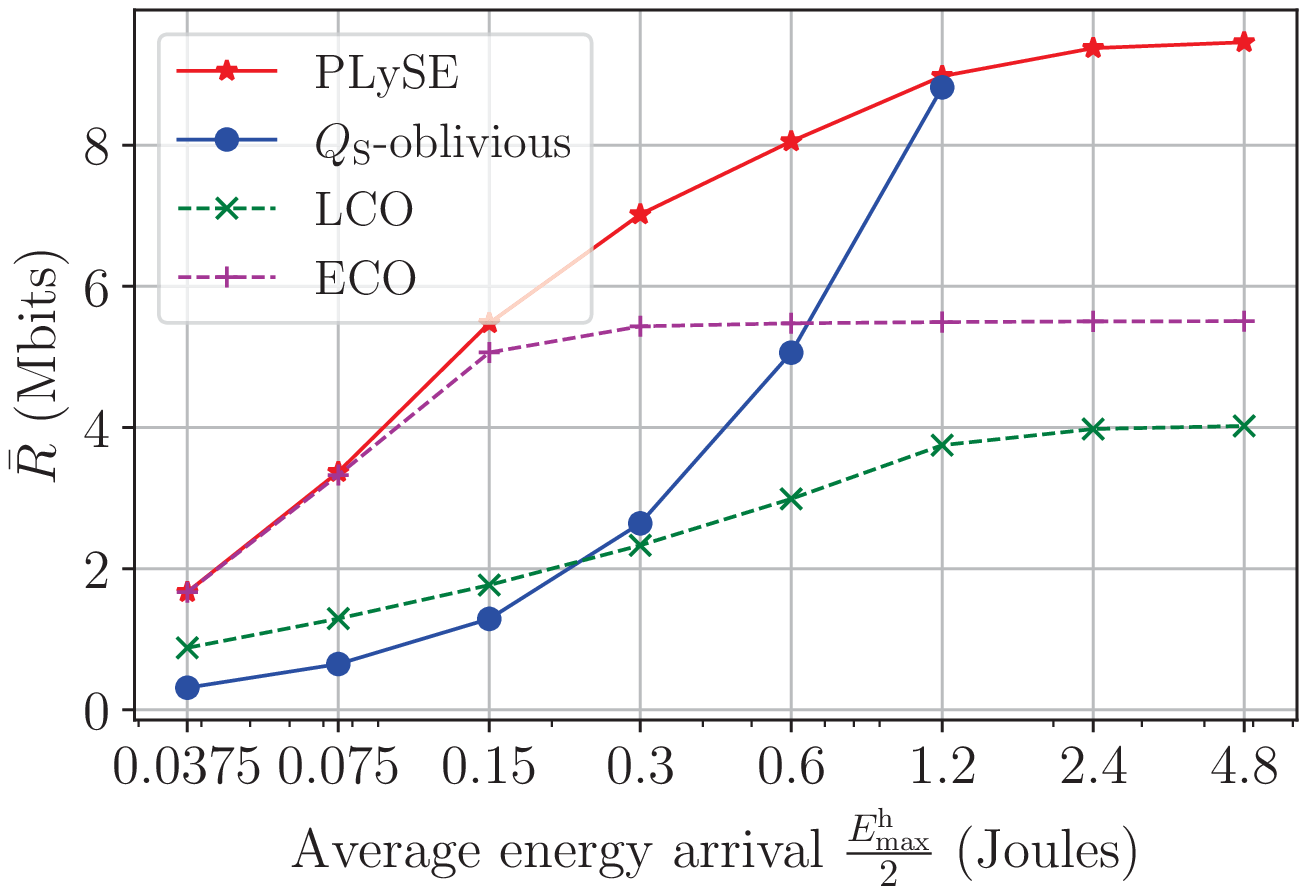}}
		\centerline{(c)}
	\end{minipage}
	\hfill
	\begin{minipage}{0.48\linewidth}
		\centerline{\includegraphics[scale=0.54]{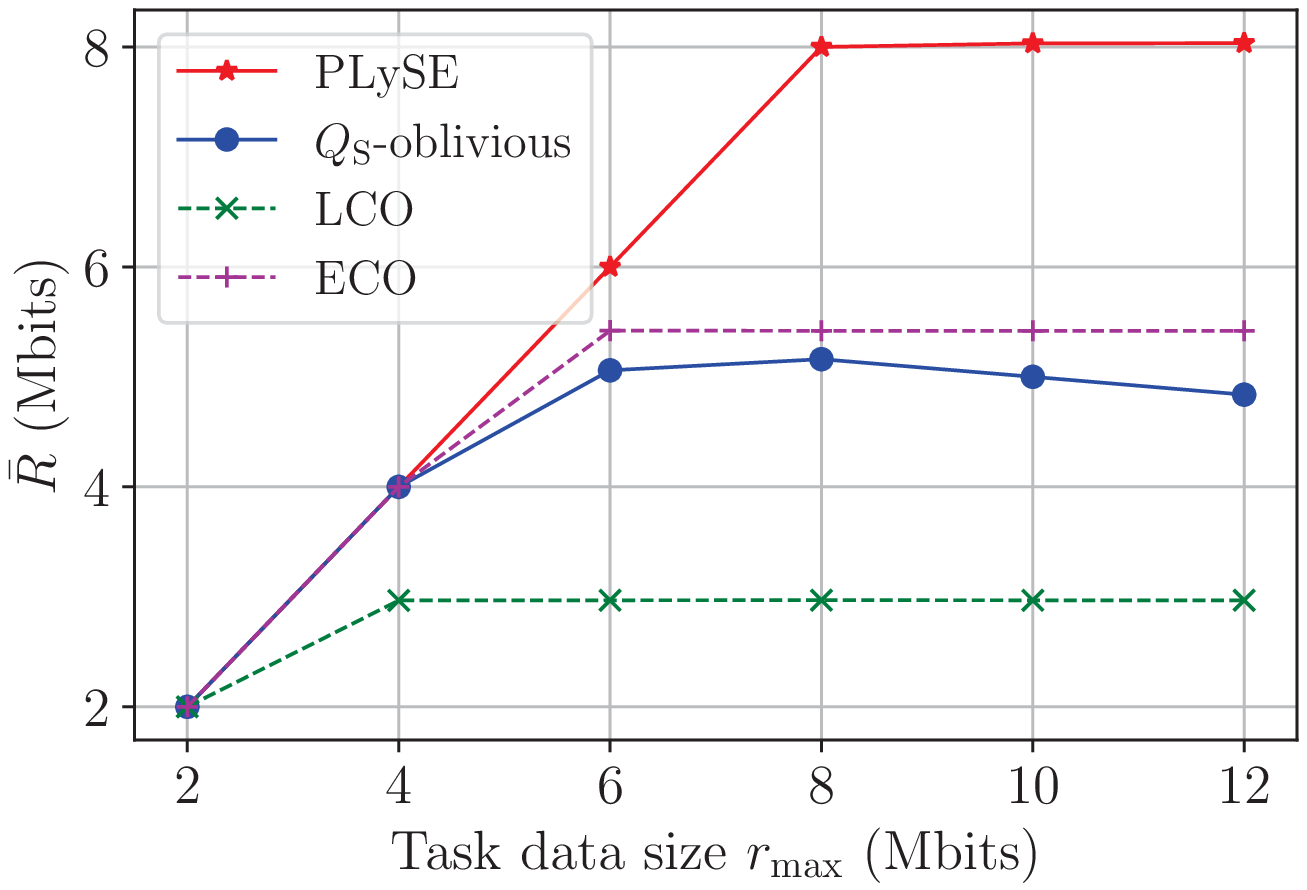}}
		\centerline{(d)}
	\end{minipage}
	\caption{Long-term average data sensing rate $\bar{R}$ versus: (a) path-loss exponent between the MS and WD $\sigma_{h}$; (b) power budget $c_{\rm th}$ at the MS; (c) average energy arrival rate $\frac{E_{\rm max}^{\rm h}}{2}$; and (d) maximum sensing data size $r_{\rm max}$.}
	\label{L3}
\end{figure}

We first examine in Fig. \ref{L3}(a) the performance of PLySE, ECO, LCO, and $Q_{\rm S}$-oblivious methods under different path-loss exponent $\sigma_{h}$. For PLySE, $Q_{\rm S}$-oblivious and ECO, $\bar{R}$ decreases with the rise of $\sigma_{h}$. This is for the reason that a larger $\sigma_{h}$ results in a more severe signal attenuation during computation offloading. In particular, at a small $\sigma_{h}=2.3$, the performance of PLySE and ECO become saturated due to the limited system resources (e.g., constrained transmit power and CPU frequency) and inherent co-channel interference from the primal link, while $Q_{\rm S}$-oblivious bears an unstable $Q_{\rm S}$ due to the surging offloading data at small $\sigma_{h}$. Because LCO does not perform task offloading, its data processing rate is not related to $\sigma_h$. As $\sigma_{h}$ increases, crossovers can be observed between ECO and LCO as well as $Q_{\rm S}$-oblivious and LCO. This is because that the WD tends to perform local computing at a large $\sigma_{h}$. Nevertheless, for all considered $\sigma_{h}$'s, the PLySE method shows a significant superiority over the other benchmarks.


In Fig. \ref{L3}(b), we depicts the long-term average data sensing rate as a function of the edge power budget $c_{\rm th}$. For PLySE and ECO, $\bar{R}$ grows with the increase of $c_{\rm th}$, and finally becomes steady due to the limited energy arrivals. Intuitively, LCO has constant data sensing performance unrelated to edge computing. For $Q_{\rm S}$-oblivious, it maintains the data queue stability and yields constant $\bar{R}$ when $c_{\rm th}\geq 0.4$ Joules. However, its offloading rate overwhelms the edge processing capability under stringent edge power constraint when $c_{\rm th}<0.4$, leading to infinite data queue backlog at the MS. Our proposed PLySE method enlarges its performance gaps to the other benchmarks as $c_{\rm th}$ climbs. Specifically, it offers 46.7\%, 132.8\%, 65.2\% higher average sensing rate than the $Q_{\rm S}$-oblivious, LCO, and ECO, respectively.


We also investigate in Fig. \ref{L3}(c) the impact of energy arrival rate on the average data sensing performance. Here, we vary the maximum energy arrival within a slot $E_{\rm max}^{\rm h}$, such that the average energy arrival rate is set as $\frac{E_{\rm max}^{\rm h}}{2}$ accordingly. It displays that the PLySE method provides significant performance improvement compared with the benchmarks for all $E_{\rm max}^{\rm h}$ considered. For all the methods, $\bar{R}$ increases with the rise of $E_{\rm max}^{\rm h}$. This is because larger $E_{\rm max}^{\rm h}$ allows higher local computing and task data offloading rate. Compared to the other three methods, $Q_{\rm S}$-oblivious is more sensitive to energy arrivals. In particular, $Q_{\rm S}$-oblivious yields the worst data sensing performance when $E_{\rm max}^{\rm h}=0.0375$ Joules, but rapidly improves $\bar{R}$ with increasing $E_{\rm max}^{\rm h}$ and finally achieve a similar $\bar{R}$ to PLySE at $E_{\rm max}^{\rm h}=1.2$ Joules. As $E_{\rm max}^{\rm h}$ grows, $Q_{\rm S}$-oblivious leads to unstable data queue at the MS, while PLySE, ECO and LCO achieve saturated $\bar{R}$ due to the hardware constraints on transmit power and CPU frequency.

%

In Fig. \ref{L3}(d), we further evaluate the data sensing performance under different task data size $r_{\rm max}$. For PLySE, LCO, and ECO, $\bar{R}$ grows with the increase of $r_{\rm max}$ and finally arrives at a stable value due to limited resources on data processing. On the other hand, with the growth of $r_{\rm max}$, $Q_{\rm S}$-oblivious first increases and then declines. This is due to the excessive energy consumption of $Q_{\rm S}$-oblivious on local computing and task data offloading, which leads to lower battery level under a larger $r_{\rm max}$. This renders a higher sensing cost $C_{\rm sen}$ and thus suppressing the data sensing rate. Nevertheless, for all $r_{\rm max}$'s, the proposed PLySE method significantly outperforms the three benchmark methods in terms of average data sensing rate.

\section{Conclusion}\label{sec7}
In this paper, we studied the optimal online policy design in a cognitive EH-MEC system. Aiming at maximizing the long-term average sensing rate while respecting the constraints on prescribed QoS requirement of primary link, long-term average power consumption at the mobile server and data queue stability, we developed an efficient online data sensing and processing algorithm named PLySE. In particular, we formulated a multi-stage stochastic optimization problem and transformed the intractable problem into per-slot deterministic optimization via the perturbed Lyapunov optimization technique. We further proposed low-complexity algorithm that obtains the optimal solution of the per-slot problem. Besides, we proved that the proposed PLySE algorithm achieves the optimal sensing rate asymptotically and meanwhile satisfying all the long-term constraints. The simulation results validated our analysis and demonstrated significant performance gain of the proposed PLySE algorithm over the considered benchmark methods. 

\appendices
\section{Proof of Lemma \ref{lem6}}\label{app0}
In the following, we derive an upper bound of \eqref{LyaDrift}. To begin with, we have that 
\begin{equation}\label{lem6_eq1}
\begin{split}
\frac{1}{2}\left(Q_{t+1}^{\rm U}\right)^2 - \frac{1}{2}\left(Q_{t}^{\rm U}\right)^2 &=  \frac{1}{2}\left(r_t-l_t^{\rm off}-l_t^{\rm loc}\right)^2 + Q_{t}^{\rm U}\left(r_t - l_t^{\rm off}-l_t^{\rm loc}\right)\\
&\leq \frac{1}{2}\left[\left(l_t^{\rm off}+l_t^{\rm loc}\right)^2+\left(r_t\right)^2\right]-Q_t^{\rm U}\left(l_t^{\rm off}+l_t^{\rm loc}-r_t\right).
\end{split}
\end{equation}
By taking the conditional expectation on both sides of \eqref{lem6_eq1}, we have 
\begin{equation}\label{lem6_eq2}
\begin{split}
\mathbb{E}\left[\frac{1}{2}\left(Q_{t+1}^{\rm U}\right)^2 - \frac{1}{2}\left(Q_{t}^{\rm U}\right)^2\mid\Theta_t\right]\leq D_1-\mathbb{E}\left[Q_t^{\rm U}\left(l_t^{\rm off}+l_t^{\rm loc}-r_t\right)\mid\Theta_t\right],
\end{split}
\end{equation}
Here, $D_1$ is a constant obtained as 
\begin{equation}\label{lem6_eq3}
\mathbb{E}\left[\frac{1}{2}\left[\left(l_t^{\rm off}+l_t^{\rm loc}\right)^2+\left(r_t\right)^2\right]\right]\leq \frac{1}{2}\left[\left(l_{\rm max}^{\rm off}+l_{\rm max}^{\rm loc}\right)^2+\left(r_{\rm max}\right)^2\right]\triangleq D_1,
\end{equation}
where $l_{\rm max}^{\rm off}=\mathbb{E}\left[WT\log_2\left(1+p_{\rm max}\gamma_t\right)\right]$ and  $l_{\rm max}^{\rm loc}=\frac{f^{\rm u}_{\rm max}T}{C}$. Similarly, we have that 
\begin{equation}\label{lem6_eq4}
\begin{split}
\mathbb{E}\left[\frac{1}{2}\left(Q_{t+1}^{\rm S}\right)^2 - \frac{1}{2}\left(Q_{t}^{\rm S}\right)^2\mid\Theta_t\right] &\leq D_2 - \mathbb{E}\left[Q_{t}^{\rm S}\left(l_t^{\rm edg}-l_t^{\rm off}\right)\mid\Theta_t\right],
\end{split}
\end{equation}
where $D_2=\frac{1}{2}\left[\left(l_{\rm max}^{\rm off}\right)^2 + \left(l_{\rm max}^{\rm edg}\right)^2\right]$ and $l_{\rm max}^{\rm edg}=\frac{f^{\rm s}_{\rm max}T}{C}$.

For the battery queue, we have that
\begin{equation}\label{lem6_eq5}
\begin{split}
\frac{1}{2}\tilde{B}_{t+1}^2-\frac{1}{2}\tilde{B}_t^2&\leq \frac{1}{2}\left(B_t-\lambda_{\rm e}e_t^{\rm u}+\lambda_{\rm e}e_t^{\rm h}-\Omega\right)^2 - \frac{1}{2}\left(B_{t}-\Omega\right)^2\\
&=\frac{1}{2}\lambda_{\rm e}^2\left(e_t^{\rm u}-e_t^{\rm h}\right)^2-\lambda_{\rm e}(B_t-\Omega)(e_t^{\rm u}-e_t^{\rm h})\\
&\leq \frac{1}{2}\left[\lambda_{\rm e}^2(e_t^{\rm u})^2+\lambda_{\rm e}^2\left(e_t^{\rm h}\right)^2\right]-\lambda_{\rm e}(B_t-\Omega)(e_t^{\rm u}-e_t^{\rm h}).
\end{split}
\end{equation}
By taking the conditional expectation on both sides of \eqref{lem6_eq1}, we have 
\begin{equation}\label{lem6_eq6}
\mathbb{E}\left[\frac{1}{2}\tilde{B}_{t+1}^2-\frac{1}{2}\tilde{B}_t^2\mid\Theta_t\right]\leq D_3 - \mathbb{E}\left[\lambda_{\rm e}(B_t-\Omega)(e_t^{\rm u}-e_t^{\rm h})\mid\Theta_t\right],
\end{equation}
where $D_3=\frac{1}{2}\left[(\lambda_{\rm e}e_{\rm max}^{\rm u})^2+\left(\lambda_{\rm e}e_{\rm max}^{\rm h}\right)^2\right]$ and $e_{\rm max}^{\rm u}=e_{\rm unit}^{\rm col}r_{\rm max} + p_{\rm max}T+\kappa_{\rm e}\left(f^{\rm u}_{\rm max}\right)^3T$.

On the other hand, using the fact that $\left[\max(x,0)\right]^2\leq(x)^2$, we have
\begin{equation}\label{lem6_eq7}
\begin{split}
\frac{1}{2}Z_{t+1}^2-\frac{1}{2}Z_t^2&\leq \frac{1}{2}\left(\lambda_{\rm c} e_t^{\rm edg}-\lambda_{\rm c} c_{\rm th}\right)^2 + Z_t\lambda_{\rm c}\left(e_t^{\rm edg}-c_{\rm th}\right)\\
\leq& \frac{1}{2}\left(\lambda_{\rm c} e_t^{\rm edg}\right)^2+\frac{1}{2}\left(\lambda_{\rm c} c_{\rm th}\right)^2 + Z_t\lambda_{\rm c}\left(e_t^{\rm edg}-c_{\rm th}\right)
\end{split}
\end{equation}
Correspondingly, we have 
\begin{equation}\label{lem6_eq8}
\mathbb{E}\left[\frac{1}{2}Z_{t+1}^2-\frac{1}{2}Z_t^2\mid\Theta_t\right]\leq D_4 - \mathbb{E}\left[ Z_t\lambda_{\rm c}\left(c_{\rm th}-e_t^{\rm edg}\right)\mid\Theta_t\right],
\end{equation}
where $D_4=\frac{1}{2}\left[\left(\lambda_{\rm c}e_{\rm max}^{\rm edg}\right)^2+\left(\lambda_{\rm c}c_{\rm th}\right)^2\right]$ and $e_{\rm max}^{\rm edg}=\kappa_{\rm c}\left(f_{\rm max}^{\rm s}\right)^3T$.

By summing up \eqref{lem6_eq2}, \eqref{lem6_eq4}, \eqref{lem6_eq6} and \eqref{lem6_eq8}, we obtain an upper bound of $\Delta^t_{V}$ in \eqref{LyaDriPlusPen}.

\section{Proof of Proposition \ref{lem_opt_solution}}\label{App_Opt_solution}
Depending on the value of $Q_t^{\rm U}-Q_t^{\rm S}$, we solve \eqref{Prob_Per_slot_sim} by considering the following two cases:

\textbf{Case I}: $Q_t^{\rm U}-Q_t^{\rm S}\geq0$. Remind that $\tilde{B}_t=B_t-\Omega \leq 0$. In this case, $F\left(f_t^{\rm u}\right)$ and $G\left(p_t^{\rm u}\right)$ are concave functions and achieve maximum at $\tilde{f}_t^{\rm u}=\sqrt{\frac{-Q_t^{\rm u}}{3\lambda_{\rm e}\tilde{B}_t\kappa_{\rm c}C}}$ and $\tilde{p}_t^{\rm u}=\frac{\left(Q_t^{\rm S}-Q_t^{\rm U}\right)W}{\lambda_{\rm e}\tilde{B}_t\ln2}-\frac{1}{\gamma_t}$, respectively. In a special case of $\tilde{B}_t = 0$, we set $\tilde{f}_t^{\rm u} = f_{\rm max}^{\rm u}$ and $\tilde{p}_t^{\rm u}=p_{\rm th}$. Denote $\hat{f}_t^{\rm u}=\min\left(\tilde{f}_t^{\rm u}, \bar{f}_{\rm th}\right)$ and $\hat{p}_t^{\rm u}=\min\left(\max\left(\tilde{p}_t^{\rm u},0\right), \bar{p}_{\rm th}\right)$. Then, the optimal solution must satisfy that $f_t^{\rm u\ast}\in\left[0,\hat{f}_t^{\rm u}\right]$ and $p_t^{\rm u\ast}\in\left[0,\hat{p}_t^{\rm u}\right]$, where both $F\left(f_t^{\rm u}\right)$ and $G\left(p_t^{\rm u}\right)$ are monotonically increasing. Let $\hat{l}_{\rm max}^{\rm loc}=\frac{\hat{f}_t^{\rm u}T}{C}$ and $\hat{l}_{\rm max}^{\rm off}=WT\log_2\left(1+\hat{p}_t^{\rm u}\gamma_t\right)$ denote the maximum amount of data processed via local computing and computation offloading in time slot $t$, respectively. Then, the optimal solution of \eqref{Prob_Per_slot_sim} can be obtained as	below:
	\begin{itemize}
		\item When $\hat{l}_{\rm max}^{\rm off}+\hat{l}_{\rm max}^{\rm loc} \leq Q_t^{\rm U}$, we can directly obtain that $f_t^{\rm u\ast}=\hat{f}_t^{\rm u}$ and $p_t^{\rm u\ast}=\hat{p}_t^{\rm u}$. 
		\item When $\hat{l}_{\rm max}^{\rm off}+\hat{l}_{\rm max}^{\rm loc} > Q_t^{\rm U}$, $l_t^{\rm off}+l_t^{\rm loc} = Q_t^{\rm U}$ must hold at optimum. By substituting $p_t^{\rm u}=\mathcal{F}_p(f_t^{\rm u})$ into $G(p_t^{\rm u})$, we can rewrite \eqref{Prob_Per_slot_sim} as
		\begin{subequations}\label{Prob_Per_slot_sim_case3}
			\begin{align}
			\underset{\substack{f_t^{\rm u}}} \max~& U\left(f_t^{\rm u}\right),~~\st
			~~f_{\rm I}^{\rm lb} \leq\! f_t^{\rm u} \!\leq\! f_{\rm I}^{\rm ub},
			\end{align}
		\end{subequations}
		where $U\left(f_t^{\rm u}\right)=\lambda_{\rm e}\tilde{B}_t\left[\mathcal{F}_p\left(f_t^{\rm u}\right)T+\kappa_{\rm c}\left(f_t^{\rm u}\right)^3T\right]\!+\!\left(Q_t^{\rm U}\right)^2 + \frac{Q_t^{\rm S}f_t^{\rm u}T}{C} - Q_t^{\rm S}Q_t^{\rm U}$, $f_{\rm I}^{\rm ub} = \hat{f}_t^{\rm u}$, and $f_{\rm I}^{\rm lb} = \max\left(0, \mathcal{F}_f\left(\hat{p}_t^{\rm u}\right)\right)$. When $\tilde{B}_t= 0$, $U\left(f_t^{\rm u}\right)=\!\left(Q_t^{\rm U}\right)^2 + \frac{Q_t^{\rm S}f_t^{\rm u}T}{C} - Q_t^{\rm S}Q_t^{\rm U}$ is a linear function of $f_t^{\rm u}$. The optimal solution of \eqref{Prob_Per_slot_sim_case3} can be obtained as $f_t^{\rm u\ast}=\hat{f}_t^{\rm u}$ and thus $p_t^{\rm u\ast}=\mathcal{F}_p\left(\hat{f}_t^{\rm u}\right)$. When $\tilde{B}_t< 0$, $U\left(f_t^{\rm u}\right)$ is a concave function. We assume $U\left(f_t^{\rm u}\right)$ achieves maximum at $\bar{f}_t^{\rm u}$, which can be obtained by solving equation
		\begin{equation}
		U^\prime(f_t^{\rm u}) = \frac{\partial U}{\partial f_t^{\rm u}} = 3\lambda_{\rm e}\kappa_{\rm c}T\tilde{B}_t\left(f_t^{\rm u}\right)^2 - \lambda_{\rm e}\tilde{B}_tTA_02^{-\frac{f_t^{\rm u}}{WC}} + \frac{T}{C}Q_t^{\rm S} = 0,
		\end{equation}
		where $A_0 = \frac{\ln 2}{WC\gamma_t}2^{\frac{Q_t^{\rm U}}{WT}}$.
		\begin{Lemma}\label{LemCase3}
			$U'(f_t^{\rm u})$ is a monotonically decreasing function of $f_t^{\rm u}$ and $U'(f_t^{\rm u})=0$ has a unique solution $\bar{f}_t^{\rm u}\in[0,+\infty)$.
		\end{Lemma}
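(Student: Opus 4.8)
The plan is to deduce both the monotonicity of $U'(f_t^{\rm u})$ and the existence/uniqueness of its root purely from the sign structure of the relevant derivatives, using crucially that we are in the regime $\tilde{B}_t<0$. First I would differentiate $U'$ once more to obtain
\begin{equation}
U''(f_t^{\rm u}) = 6\lambda_{\rm e}\kappa_{\rm c}T\tilde{B}_t f_t^{\rm u} + \lambda_{\rm e}\tilde{B}_t T A_0\frac{\ln 2}{WC}2^{-\frac{f_t^{\rm u}}{WC}}.
\end{equation}
Since $\tilde{B}_t<0$ while $\lambda_{\rm e},\kappa_{\rm c},T,A_0,W,C>0$ and $f_t^{\rm u}\geq 0$, the first summand is non-positive and the second is strictly negative because $2^{-f_t^{\rm u}/(WC)}>0$. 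Hence $U''(f_t^{\rm u})<0$ throughout $[0,+\infty)$, so $U'$ is strictly (monotonically) decreasing; this alone guarantees that $U'(f_t^{\rm u})=0$ has \emph{at most} one solution.

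For existence I would evaluate $U'$ at the two ends of $[0,+\infty)$. At $f_t^{\rm u}=0$ the quadratic term vanishes, leaving $U'(0)=-\lambda_{\rm e}\tilde{B}_t T A_0 + \frac{T}{C}Q_t^{\rm S}$; here $-\lambda_{\rm e}\tilde{B}_t T A_0>0$ because $\tilde{B}_t<0$ and $A_0>0$, and $\frac{T}{C}Q_t^{\rm S}\geq 0$ since $Q_t^{\rm S}$ is a queue backlog, so $U'(0)>0$. As $f_t^{\rm u}\to+\infty$, the exponential term $-\lambda_{\rm e}\tilde{B}_t T A_0\, 2^{-f_t^{\rm u}/(WC)}\to 0$ while the quadratic term $3\lambda_{\rm e}\kappa_{\rm c}T\tilde{B}_t(f_t^{\rm u})^2$ diverges to $-\infty$ because its coefficient is negative; hence $U'(f_t^{\rm u})\to-\infty$. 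By continuity of $U'$, the sign change from $U'(0)>0$ to the limit $-\infty$ yields a root via the intermediate value theorem, and the strict monotonicity established above makes this root the unique $\bar{f}_t^{\rm u}\in(0,+\infty)$.

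The argument is fundamentally a sign-bookkeeping exercise, so I do not expect a genuine obstacle; the only places demanding care are (i) confirming that strict monotonicity persists down to $f_t^{\rm u}=0$, where the first term of $U''$ degenerates and one must lean on the strictly negative exponential term, and (ii) justifying that the decaying exponential cannot offset the quadratic growth in the limit $f_t^{\rm u}\to+\infty$. Both are settled by the explicit expressions above. Since $U'$ is strictly decreasing with a single sign change, $\bar{f}_t^{\rm u}$ is exactly the unconstrained maximizer of the concave objective $U$ in \eqref{Prob_Per_slot_sim_case3}, and the strict monotonicity is precisely the property that validates the bisection search invoked in Proposition \ref{lem_opt_solution}.
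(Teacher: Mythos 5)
Your proof is correct and takes essentially the same route as the paper: the paper likewise shows $\frac{\partial U^\prime}{\partial f_t^{\rm u}}<0$ on $[0,+\infty)$ (stated there as "easily found," whereas you write out $U''$ and the sign bookkeeping explicitly, correctly using $\tilde{B}_t<0$), then combines $U^\prime(0)=-\lambda_{\rm e}TA_0\tilde{B}_t+\frac{T}{C}Q_t^{\rm S}>0$ with $U^\prime(f_t^{\rm u})\to-\infty$ to conclude existence and uniqueness of the root. Your added care at $f_t^{\rm u}=0$ and in the limit is a faithful elaboration of the paper's argument, not a different approach.
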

		\begin{proof}
			By taking the derivative of $U'$ in terms of $f_t^{\rm u}$, we can easily find that $\frac{\partial U^\prime}{\partial f_t^{\rm u}}<0$ for $f_t^{\rm u}\in[0,+\infty)$. That is, $U^\prime(f_t^{\rm u})$ is a monotonically decreasing function of $f_t^{\rm u}$. When $f_t^{\rm u}=0$, $U^\prime(f_t^{\rm u})=-\lambda_{\rm e}TA_0\tilde{B}_t + \frac{T}{C}Q_t^{\rm S}>0$. When $f_t^{\rm u}\to +\infty$, $U'(f_t^{\rm u})\to -\infty$. Therefore, there is a unique solution $\bar{f}_t^{\rm u}\in[0,+\infty)$ for $U^\prime(f_t^{\rm u})=0$.   
		\end{proof}
		Based on Lemma \ref{LemCase3}, we can obtain $\bar{f}_t^{\rm u}$ that satisfies $U'(\bar{f}_t^{\rm u})=0$ using bi-section search method. Then, for the case of $\tilde{B}_t< 0$, the optimal solution of \eqref{Prob_Per_slot_sim_case3} can be given as $f_t^{\rm u\ast}=\breve{f}_t^{\rm u}$, where $\breve{f}_t^{\rm u}\!=\!\min\left(\max\left(f_{\rm I}^{\rm lb}\!,\!\bar{f}_t^{\rm u}\right)\!,\!f_{\rm I}^{\rm ub}\right)$. Correspondingly, $p_t^{\rm u\ast}=\mathcal{F}_p\left(\breve{f}_t^{\rm u}\right)$.
	\end{itemize}

	\textbf{Case II}: $Q_t^{\rm U}-Q_t^{\rm S}<0$. In this case, $F$ is a concave function of $f_t^{\rm u}$ and $G$ monotonically decreases with $p_t^{\rm u}$. The optimal solution of \eqref{Prob_Per_slot_sim} can be easily obtained as $p_t^{\rm u\ast}=0$ and $f_t^{\rm u\ast}=\hat{f}_t^{\rm u}$.
	
	By summarizing the results in Case I and II, we finally obtain the optimal solution of \eqref{Prob_Per_slot_sim} as shown in \eqref{Opt_pu_fu}.

\vspace{-1em}
\section{Proof of Proposition \ref{lem3}}\label{app1}
The main idea of proving Proposition \ref{lem3} is to select an appropriate $\Omega$ to make constraint \eqref{EH_caus} implicit for any $B_t\in[0,\Omega]$. By initially setting $\Omega\geq \lambda_{\rm e}e_{\rm max}^{\rm u}+\lambda_{\rm e}E_{\rm max}^{\rm h}$, we prove Proposition \ref{lem3} by considering following three cases depending on the value of $B_t$. 

\textbf{Case I}: When $B_t\in[\lambda_{\rm e}e_{\rm max}^{\rm u}, \Omega]$, we have $B_{t+1} \leq \min\left(\Omega+\lambda_{\rm e}E_{\rm max}^{\rm h},\Omega\right) = \Omega$ based on the energy dynamic \eqref{E_evol}. Because $B_t\geq \lambda_{\rm e}e_{\rm max}^{\rm u}\geq \lambda_{\rm e}e_t^{\rm u}$ for all feasible $r_t$, $f_t^{\rm u}$ and $p_t^{\rm u}$, the energy causality constraint \eqref{EH_caus} is satisfied. Thus, we have $0\leq B_{t+1}\leq \Omega$.
	
\textbf{Case II}: When $B_t\in[0, B_{\rm min}]$, we have $B_{t+1} \leq B_t + \lambda_{\rm e}e^{\rm h}_t \leq B_{\rm min} + \lambda_{\rm e}E_{\rm max}^{\rm h}< \Omega$. From \eqref{EH_caus}, the WD stops consuming energy on data sensing and processing, i.e., $e_t^{\rm col}=e_t^{\rm off}=e_t^{\rm loc}=0$. The energy causality constraint \eqref{EH_caus} is satisfied and thus we have $0\leq B_{t+1}< \Omega$.
	
\textbf{Case III}: When $B_t\in[B_{\rm min}, \lambda_{\rm e}e_{\rm max}^{\rm u}]$, we have $B_{t+1} \leq \lambda_{\rm e}e_{\rm max}^{\rm u}+\lambda_{\rm e}E_{\rm max}^{\rm h}\leq \Omega$. To satisfy the energy causality $\lambda_{\rm e}e_t^{\rm u}\leq B_{t}$ for all $B_t\in[B_{\rm min}, \lambda_{\rm e}e_{\rm max}^{\rm u}]$, one possible solution is to set an appropriate $\Omega$, such that the energy cost on data collection is zero (i.e., $C_{\rm sen}>0$ and thus $r_t=0$) and that on data transmission and local computing is less than $B_{\rm min}$. Accordingly, we derive the condition on $\Omega$ as follows.
	\begin{itemize}
		\item Based on \eqref{Opt_r}, we have that $r_t=0$ if $\frac{V-Q_t^{\rm U}}{\lambda_{\rm e}e_{\rm unit}^{\rm col}}+B_t<\frac{V}{\lambda_{\rm e}e_{\rm unit}^{\rm col}}+\lambda_{\rm e}e_{\rm max}^{\rm u}\leq\Omega$, where the first inequality holds because $B_t \in [B_{\rm min}, \lambda_{\rm e}e_{\rm max}^{\rm u}]$ by assumption.
		
		\item Based on \eqref{Opt_pu_fu}, we observe that the WD consumes the maximum energy on data transmission and local computing when $\left\{f_t^{\rm u\ast},p_t^{\rm u\ast}\right\} = \left\{\hat{f}_t^{\rm u}, \hat{p}_t^{\rm u}\right\}$. To ensure $\lambda_{\rm e}\left(e_t^{\rm off}+e_t^{\rm loc}\right)\leq B_{\rm min}$, it suffices to select an $\Omega$ that satisfies $\lambda_{\rm e}\left(\hat{e}_t^{\rm off}+\hat{e}_t^{\rm loc}\right)\leq B_{\rm min}$, where $\hat{e}_t^{\rm off} = \hat{p}_t^{\rm u}T$ and $\hat{e}_t^{\rm loc} = \kappa_{\rm e}\left(\hat{f}_t^{\rm u}\right)^3T$. Recall that $\hat{p}_t^{\rm u} = \min\left(\max\left(\tilde{p}_t^{\rm u},0\right),\bar{p}_{\rm th}\right)$ and $\hat{f}_t^{\rm u}=\min\left(\tilde{f}_t^{\rm u},\bar{f}_{\rm th}\right)$, where $\tilde{f}_t^{\rm u}=\sqrt{\frac{-Q_t^{\rm u}}{3\lambda_{\rm e}\tilde{B}_t\kappa_{\rm c}C}}$ and $\tilde{p}_t^{\rm u}=\frac{\left(Q_t^{\rm S}-Q_t^{\rm U}\right)W}{\lambda_{\rm e}\tilde{B}_t\ln2}-\frac{1}{\gamma_t}$, respectively. In the following, we discuss the value of $\Omega$ considering two sub-cases: 
		
		a) When $\tilde{p}_t^{\rm u}> 0$, we have $\hat{p}_t^{\rm u}\leq \tilde{p}_t^{\rm u}$. Using the result in Lemma \ref{lem_ub} that $Q_t^{\rm U}\leq V+r_{\rm max}$, we have 
		\begin{subequations}
		\setlength{\abovedisplayskip}{5pt}
		\setlength{\belowdisplayskip}{5pt}
		\begin{align}
		\hat{f}_t^{\rm u} &\leq \sqrt{\frac{-Q_t^{\rm u}}{3\lambda_{\rm e}\tilde{B}_t\kappa_{\rm c}C}} \leq \sqrt{\frac{V+r_{\rm max}}{3\lambda_{\rm e}\left(\Omega-\lambda_{\rm e}e_{\rm max}^{\rm u}\right)\kappa_{\rm c}C}},\label{App1_eq1}\\
		\hat{p}_t^{\rm u} &\leq \frac{\left(Q_t^{\rm S}-Q_t^{\rm U}\right)W}{\lambda_{\rm e}\tilde{B}_t\ln2}-\frac{1}{\gamma_t}\leq \frac{\left(V+r_{\rm max}\right)W}{\lambda_{\rm e}\left(\Omega-\lambda_{\rm e}e_{\rm max}^{\rm u}\right)\ln 2},\label{App1_eq2}
		\end{align}
		\end{subequations}
		where the first inequality of \eqref{App1_eq1} holds because $\hat{f}_t^{\rm u}\leq \tilde{f}_t^{\rm u}$. By submitting the right-hand sides of \eqref{App1_eq1} and \eqref{App1_eq2} into $\lambda_{\rm e}\left(\hat{e}_t^{\rm off}+\hat{e}_t^{\rm loc}\right)\leq B_{\rm min}$, the inequality can be equivalently written as
		\begin{equation}\label{lem4_case3_eq1}
		\begin{split}
		H(\Omega) = \lambda_{\rm e}\kappa_{\rm e}\left(\sqrt{\frac{V+r_{\rm max}}{3\lambda_{\rm e}\left(\Omega-\lambda_{\rm e}e_{\rm max}^{\rm u}\right)\kappa_{\rm c}C}}\right)^3T + \frac{\left(V+r_{\rm max}\right)W}{\left(\Omega-\lambda_{\rm e}e_{\rm max}^{\rm u}\right)\ln 2}T \leq B_{\rm min}.
		\end{split}
		\end{equation}
		Notice that $\Omega\geq \lambda_{\rm e}e_{\rm max}^{\rm u}$. It is obviously that $H(\Omega)$ is decreasing with $\Omega$. Besides, $H(\Omega)\to +\infty$ when $\Omega\to \lambda_{\rm e}e_{\rm max}^{\rm u}$ and $H(\Omega)\to 0$ when $\Omega\to +\infty$. Therefore, there exists at least one $\Omega\in\left(\lambda_{\rm e}e_{\rm max}^{\rm u},+\infty\right)$ that satisfies \eqref{lem4_case3_eq1}. After some simple mathematical manipulations, we can rewrite \eqref{lem4_case3_eq1} as
		\begin{equation}\label{lem4_case3_eq2}
		\bar{H}(x) = A_1^2x^3 + 2A_1A_2x^2 + A_2^2x + A_3 \geq 0
		\end{equation}
		where $x=\Omega-\lambda_{\rm e}e_{\rm max}^{\rm u}$, $A_1 = \frac{3C\kappa_{\rm c}B_{\rm min}}{\kappa_{\rm e}\left(V+r_{\rm max}\right)T}$, $A_2 = -\frac{3CW\kappa_{\rm c}}{\kappa_{\rm e}\ln 2}$, $A_3 = -\frac{V+r_{\rm max}}{3\lambda_{\rm e}C\kappa_{\rm c}}$. $\bar{H}(x)$ is a cubic function and the three solutions of $\bar{H}(x)=0$ can be given as
		\begin{equation}
		x_k = 2\sqrt{-\frac{\bar{A}_1}{3}}\cos\left[\frac{1}{3}\arccos\left(\frac{3\bar{A}_2}{2\bar{A}_1}\sqrt{-\frac{3}{\bar{A}_1}}\right)-\frac{2\pi k}{3}\right], ~\text{for}~k=0,1,2,
		\end{equation}
		Where $\bar{A}_1 = -\frac{A_2^2}{3A_1^2}$ and $\bar{A}_2 = \frac{-2A_2^3+27A_1A_3}{27A_1^3}$. 
		
		Notice that $\bar{A}_1<0$ and the coefficient of $x^3$ in $\bar{H}(x)$ is $A_1^2>0$. Accordingly, we can satisfy \eqref{lem4_case3_eq2} by selecting  $x\geq x_{\rm max}$, i.e., $\Omega\geq x_{\rm max}+\lambda_{\rm e}e_{\rm max}^{\rm u}$, where $x_{\rm max} = \max_{k}\left(x_k\right)$. 
		
		b) When $\tilde{p}_t\leq 0$, we have $\hat{p}_t=0$. In this sub-case, we need to select an $\Omega$ satisfying $\hat{e}_t^{\rm loc}\leq B_{\rm min}$. That is,
		\begin{equation}\label{APP1_eq3}
		\lambda_{\rm e}\kappa_{\rm e}\left(\sqrt{\frac{V+r_{\rm max}}{3\lambda_{\rm e}\left(\Omega-\lambda_{\rm e}e_{\rm max}^{\rm u}\right)\kappa_{\rm c}C}}\right)^3T \leq B_{\rm min}.
		\end{equation}
		Obviously, $\Omega\geq x_{\rm max}+\lambda_{\rm e}e_{\rm max}^{\rm u}$ also satisfies \eqref{APP1_eq3}.
	\end{itemize}
From the above discussions, we can set $\Omega\!\!\geq\!\max\left(\!\frac{V}{\lambda_{\rm e}e_{\rm unit}^{\rm col}}\!+\!\lambda_{\rm e}e_{\rm max}^{\rm u},\!x_{\rm max}\!+\!\lambda_{\rm e}e_{\rm max}^{\rm u}\!\right)\!\!+\!\lambda_{\rm e}E_{\rm max}^{\rm h}$ to meet the energy causality constraint \eqref{EH_caus} for all $B_t\!\in\![0,\Omega]$, which ends the proof of Proposition \ref{lem3}.

\section{Proof of Proposition \ref{lem4}}\label{app2}
To start with, we introduce the following two useful lemma to prove Proposition \ref{lem4}.
\begin{Lemma}\label{lem2}
	The optimal utility $\bar{R}_{\rm P2}^\ast$ to the relaxed problem (P2) can be achieved arbitrarily closely by an $\omega$-only policy, i.e., for any $\delta>0$, there exists an $\omega$-only policy $\Pi$, achieves
	\begin{equation}
	\mathbb{E}\left[r_t^{\Pi}\right] \geq l_{\rm P2}^\ast - \delta,
	\end{equation}
	while satisfying the constraints \eqref{SNR_thre}, \eqref{ledge_off_cons}, and \eqref{Prob_const2} in (P2), and
	\begin{subequations}\label{AppD_eq2}
		\setlength{\abovedisplayskip}{5pt}
		\setlength{\belowdisplayskip}{5pt}
		\begin{align}
		&\mathbb{E}\left[e_t^{\rm edg,\Pi}-c_{\rm th}\right] \leq \delta,~ \mathbb{E}\left[e_t^{\rm u, \Pi}-e_t^{\rm h,\Pi}\right] \leq \delta,\\
		&\mathbb{E}\left[l_t^{\rm{off},\Pi}\right] \leq \mathbb{E}\left[l_t^{\rm{edg},\Pi}\right]+\delta,~
		\mathbb{E}\left[r_t^{\Pi}\right] \leq \mathbb{E}\left[l_t^{\rm{off},\Pi}+l_t^{\rm{loc},\Pi}\right] + \delta.
		\end{align}
	\end{subequations}
\end{Lemma}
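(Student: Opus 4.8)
The plan is to invoke the existence theorem for optimal stationary randomized ($\omega$-only) policies that underpins Lyapunov optimization over i.i.d.\ random events, following Neely's framework \cite{Neely2010}. The essential claim is that, because the environment process $\omega_t=\{a_t,e_t^{h},h_t,\bar{g}_t,\bar{h}_t\}$ is i.i.d.\ and all per-slot constraints in (P2) are instantaneous while the remaining constraints are long-term time-averages, the entire feasible performance region of (P2) can be attained in the limit by policies that depend on $\omega_t$ alone. This immediately yields the existence of an $\omega$-only policy approaching the optimum.

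First I would characterize the single-slot achievable set. For a fixed realization $\omega$, let the instantaneously feasible controls $\{r,p^{\rm u},f^{\rm u},f^{\rm s}\}$ (i.e., those satisfying \eqref{SNR_thre}, \eqref{ledge_off_cons} and \eqref{Prob_const2}) generate the outcome vector
\begin{equation}
\mathbf{v}(\omega)=\left(r,\; e^{\rm edg}-c_{\rm th},\; e^{\rm u}-e^{\rm h},\; l^{\rm off}-l^{\rm edg},\; r-l^{\rm off}-l^{\rm loc}\right),
\end{equation}
whose first coordinate is the objective and whose remaining coordinates are the slacks of \eqref{Bud_cons}, \eqref{Energy_lonterm}, and the two queue-rate-stability conditions. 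Because every control lives in a bounded box ($r\le r_{\rm max}$, $p^{\rm u}\le p_{\rm max}$, $f^{\rm u}\le f^{\rm u}_{\rm max}$, $f^{\rm s}\le f^{\rm s}_{\rm max}$), each coordinate of $\mathbf{v}(\omega)$ is bounded and the attainable set is compact. Allowing an $\omega$-only policy to randomize its decision as a measurable function of $\omega$ and then taking expectation over the i.i.d.\ law of $\omega$ yields a convex, bounded region $\mathcal{R}$ of expected outcome vectors; by Carath\'eodory's theorem the randomization need only mix over finitely many decisions per realization.

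Second, the core step is to show that the time-average expected performance of \emph{any} admissible policy for (P2), including non-stationary and queue-backlog-dependent ones, lies in the closure $\overline{\mathcal{R}}$. The standard argument is that the $N$-slot running average $\frac{1}{N}\sum_{t=0}^{N-1}\mathbb{E}[\mathbf{v}_t]$ is, for each $N$, a convex combination of conditional one-slot expectations, each of which belongs to $\mathcal{R}$; since $\mathcal{R}$ is convex and bounded, every such average and hence every limit point lies in $\overline{\mathcal{R}}$. Consequently $\bar{R}_{\rm P2}^\ast=\sup\{\,\bar{r}:\mathbf{v}\in\overline{\mathcal{R}},\ \text{its constraint coordinates}\le 0\,\}$, and this supremum is realized by a point of $\overline{\mathcal{R}}$.

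Finally I would approximate that optimizing boundary point: since it lies in $\overline{\mathcal{R}}$, for any $\delta>0$ there is an actual $\omega$-only policy $\Pi$ whose expected outcome vector is within $\delta$ coordinatewise, which simultaneously gives $\mathbb{E}[r_t^{\Pi}]\ge\bar{R}_{\rm P2}^\ast-\delta$ together with the relaxed slacks $\mathbb{E}[e_t^{\rm edg,\Pi}-c_{\rm th}]\le\delta$, $\mathbb{E}[e_t^{\rm u,\Pi}-e_t^{\rm h,\Pi}]\le\delta$, $\mathbb{E}[l_t^{\rm off,\Pi}]\le\mathbb{E}[l_t^{\rm edg,\Pi}]+\delta$ and $\mathbb{E}[r_t^{\Pi}]\le\mathbb{E}[l_t^{\rm off,\Pi}+l_t^{\rm loc,\Pi}]+\delta$, which are exactly \eqref{AppD_eq2}. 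The main obstacle I anticipate is rigorously establishing the inclusion of arbitrary-policy time-averages in $\overline{\mathcal{R}}$; this hinges on the i.i.d.\ assumption on $\omega_t$ and on the compactness supplied by the bounded control boxes, and is where the measurable-selection and Carath\'eodory machinery must be applied carefully rather than waved through.
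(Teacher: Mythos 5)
Your proposal is correct and takes essentially the same route as the paper, which simply cites Theorem 4.5 of \cite{Neely2010} and omits the details; your achievable-region argument (bounded per-slot outcome vectors, convexification via randomized $\omega$-only decisions and Carath\'eodory, inclusion of arbitrary-policy time averages in $\overline{\mathcal{R}}$, then coordinatewise $\delta$-approximation of the optimizing boundary point) is precisely the internal machinery of that cited theorem. The one point to handle carefully is that \eqref{ledge_off_cons} depends on the queue backlogs $Q_t^{\rm U},Q_t^{\rm S}$ rather than on $\omega_t$ alone, so it cannot literally sit inside the per-realization feasible set of an $\omega$-only policy; in the standard treatment it is enforced by serving $\min\left(Q_t,\cdot\right)$ of the allocated rate, which leaves the time averages (and hence your argument) unchanged.
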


\begin{proof}
	The proof follows the framework of Theorem 4.5 in \cite{Neely2010} and is omitted here for brevity.
\end{proof}

\vspace{-1em}
\begin{Lemma}\label{lem7}
	If $Z_t$ is mean rate stable, i.e., $\lim_{N\to\infty}\frac{\mathbb{E}[Z_N]}{N}=0$, then the average constraint \eqref{Bud_cons} is satisfied.
\end{Lemma}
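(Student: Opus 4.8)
The plan is to exploit the one-sided inequality buried in the virtual-queue dynamics \eqref{Queue_Z} together with a telescoping sum. First I would discard the $\max$ operator: since $\max(a,0)\geq a$ for every real $a$, the update \eqref{Queue_Z} immediately gives
\begin{equation}
Z_{t+1}\geq Z_t+\lambda_{\rm c}e_t^{\rm edg}-\lambda_{\rm c}c_{\rm th},
\end{equation}
so that $Z_{t+1}-Z_t\geq \lambda_{\rm c}\left(e_t^{\rm edg}-c_{\rm th}\right)$ holds for all $t$.

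Next I would sum this inequality over $t=1,\dots,N$ and telescope the left-hand side, which collapses to $Z_{N+1}-Z_1\geq \lambda_{\rm c}\sum_{t=1}^{N}\left(e_t^{\rm edg}-c_{\rm th}\right)$. Taking expectations, dividing by $\lambda_{\rm c}N$ (legitimate since $\lambda_{\rm c}>0$), and rearranging yields
\begin{equation}
\frac{1}{N}\sum_{t=1}^{N}\mathbb{E}\left[e_t^{\rm edg}\right]-c_{\rm th}\leq \frac{\mathbb{E}\left[Z_{N+1}\right]-\mathbb{E}\left[Z_1\right]}{\lambda_{\rm c}N}.
\end{equation}

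Finally I would let $N\to\infty$. The initial backlog $Z_1$ is a fixed finite constant, so $\mathbb{E}\left[Z_1\right]/(\lambda_{\rm c}N)\to 0$; and the mean-rate-stability hypothesis $\lim_{N\to\infty}\mathbb{E}\left[Z_N\right]/N=0$ gives $\mathbb{E}\left[Z_{N+1}\right]/N=\tfrac{N+1}{N}\cdot\mathbb{E}\left[Z_{N+1}\right]/(N+1)\to 0$ as well. Hence the right-hand side vanishes, and the remaining inequality is precisely $\lim_{N\to\infty}\frac{1}{N}\sum_{t=1}^{N}\mathbb{E}\left[e_t^{\rm edg}\right]\leq c_{\rm th}$, i.e.\ the average power constraint \eqref{Bud_cons}.

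There is no substantial obstacle here; this is the canonical ``virtual-queue stability implies time-average constraint satisfaction'' lemma of Lyapunov optimization, and the whole content lies in dropping the $\max$ and telescoping. The only points needing minor care are the index bookkeeping in the telescope (which produces $Z_{N+1}$ rather than $Z_N$) and confirming that replacing the denominator $N$ by $N+1$ does not alter the limit, both of which are routine.
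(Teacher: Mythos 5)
Your proof is correct and takes essentially the same route as the paper: the paper simply cites the sample-path property (Lemma 2.1 of \cite{Neely2010}), which is exactly your drop-the-$\max$-and-telescope inequality $Z_{t+1}\geq Z_t+\lambda_{\rm c}\left(e_t^{\rm edg}-c_{\rm th}\right)$, and then concludes identically by taking expectations, normalizing by $N$, and invoking mean rate stability. The only cosmetic differences are that you derive that inequality from first principles rather than citing it, and your index bookkeeping produces $Z_{N+1}$ where the paper writes $Z_N$, which you correctly show is immaterial in the limit.
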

\begin{proof}
	Using the sample path property (Lemma 2.1 in \cite{Neely2010}), we have that
	\begin{equation}
	\frac{Z_{N}}{N}-\frac{Z_{0}}{N}\geq \frac{1}{N}\mathsmaller{\sum}_{t=1}^{N}\left(e_t^{\rm edg}-c_{\rm th}\right).
	\end{equation}
	Dropping the negative terms $\frac{Z_0}{N}$, taking the expectation of the equation above and letting $N\to\infty$, we have that
	\begin{equation}
	\lim_{N\to\infty}\frac{\mathbb{E}\left[Z_{N}\right]}{N}\geq\lim_{N\to\infty}\frac{1}{N}\mathsmaller{\sum}_{t=1}^{N}\mathbb{E}\left[e_t^{\rm edg}-c_{\rm th}\right].
	\end{equation}
	Submitting $\lim_{N\to\infty}\frac{\mathbb{E}[Z_N]}{N}=0$, then we have $\lim_{N\to\infty}\frac{1}{N}\sum_{t=1}^{N}\mathbb{E}\left[e_t^{\rm edg}\right]\leq c_{\rm th}$, which completes the proof.
\end{proof}
\textbf{Proof of Proposition \ref{lem4}:} Consider the upper bound on the Lyapunov drift-plus-penalty function \eqref{LyaDriPlusPen}. We denote the policy produced by PLySE as $\Psi$. Since the solution of PLySE minimizes the upper bound on the Lyapunov drift-plus-penalty function $\Delta^{t}_V$, the following inequality holds:
\begin{small}
\begin{equation}\label{Lem_eq1}
\begin{split}
&\Delta^{t}_V=\Delta^t-V\mathbb{E}\left[r_t|\Theta_t\right]\\
&\leq D \!\!-\!\!\mathbb{E}\left\{\!\!Vr_t^{\Psi}\!\!+\!\!Z_t\lambda_{\rm c}\left(\!\!c_{\rm th}\!\!-\!e_t^{\rm edg,\Psi}\!\right)\!\!+\!\!\lambda_{\rm e}(B_t\!\!-\!\!\Omega)\left(\!\!e_t^{\rm u,\Psi}\!\!\!-\!\!\!e_t^{\rm h,\Psi}\!\right)\!\!+\!\! Q_t^{\rm U}\left(\!l_t^{\rm off,\Psi}\!\!+\!\!l_t^{\rm loc,\Psi}\!\!\!-\!\!r_t^{\Psi}\!\right)\!\!+\!\!Q_t^{\rm S}\left(\!l_t^{\rm edg,\Psi}\!\!\!-\!\!l_t^{\rm off,\Psi}\!\right)\!\!\mid\!\Theta_t\!\right\}\\
&\leq D \!\!-\!\!\mathbb{E}\left\{\!\!Vr_t^{\Pi}\!\!+\!\!Z_t\lambda_{\rm c}\left(\!\!c_{\rm th}\!\!-\!e_t^{\rm edg,\Pi}\!\right)\!\!+\!\!\lambda_{\rm e}(B_t\!\!-\!\!\Omega)\left(\!\!e_t^{\rm u,\Pi}\!\!\!-\!\!\!e_t^{\rm h,\Pi}\!\right)\!\!+\!\! Q_t^{\rm U}\left(\!l_t^{\rm off,\Pi}\!\!+\!\!l_t^{\rm loc,\Pi}\!\!\!-\!\!r_t^{\Pi}\!\right)\!\!+\!\!Q_t^{\rm S}\left(\!l_t^{\rm edg,\Pi}\!\!\!-\!\!l_t^{\rm off,\Pi}\!\right)\!\mid\!\Theta_t\!\right\}\\
&\overset{(\dag)}= D - \mathbb{E}\left[Vr_t^{\Pi}\right] \!+ \!\mathbb{E}\left[Z_t\lambda_{\rm c}\left(e_t^{\rm edg,\Pi}\!-\!c_{\rm th}\right)\right]\!+\!\mathbb{E}\left[\lambda_{\rm e}(\Omega\!-\!B_t)(e_t^{\rm u,\Pi}\!-\!e_t^{\rm h,\Pi})\right]\\
& + \mathbb{E}\left[Q_t^{\rm U}\left(r_t^{\Pi}-l_t^{\rm off,\Pi}-l_t^{\rm loc,\Pi}\right)\right]+\!\mathbb{E}\left[Q_t^{\rm S}(l_t^{\rm off,\Pi}-l_t^{\rm edg,\Pi})\right]\\
&\overset{(\ddag)}\leq D - V\left(\bar{R}_{\rm P2}^\ast-\delta\right) + \left[Z_t\lambda_{\rm c}+ Q_t^{\rm S} + \lambda_{\rm e}\Omega + Q_{\rm max}^{\rm U}\right]\delta,
\end{split}
\end{equation}
\end{small}where $(\dag)$ holds for the independence of policy $\Pi$ on $\Theta_t$ and $(\ddag)$ is obtained by plugging \eqref{AppD_eq2}. Let $\delta \to 0$, we have that
\begin{equation}\label{Eq1_app2}
\Delta^t-V\mathbb{E}\left[r_t^{\Psi}|\Theta_t\right]\leq D-V\bar{R}_{\rm P2}^\ast.
\end{equation}
By summing up the both sides of \eqref{Eq1_app2} from $t=1$ to $N$, and taking iterated expectations and telescoping sums, then normalizing by $VN$, we have that 
\begin{equation}
\frac{\mathbb{E}\left[\Phi_{t}\right]\!-\!\mathbb{E}\left[\Phi_{0}\right]}{NV}\!-\!\frac{1}{N}\mathsmaller{\sum}_{t=0}^{N-1}\mathbb{E}\left[r_t^{\Psi}\right]\!\leq\! \frac{D}{V}\!-\!\bar{R}_{\rm P2}^\ast.
\end{equation}
By rearranging terms and letting $N \to \infty$, we prove a) that 
\begin{equation}
\bar{R}_{\Psi}= \lim_{N\!\to+\!\infty}\!\!\frac{1}{N}\!\mathsmaller{\sum}_{t=0}^{N\!-\!1}\mathbb{E}\left[r_t^{\Psi}\right]\!\!\geq \!\bar{R}_{\rm P2}^\ast\!-\!\frac{D}{V} \!\overset{(\S)}\geq\! \bar{R}_{\rm P1}^\ast\!-\!\frac{D}{V},
\end{equation}
where $\S$ is due to $\bar{R}_{\rm P1}^\ast \leq \bar{R}_{\rm P2}^\ast$.

To prove b), we plug the stationary and randomized policy $\Gamma$ that satisfies the Slater conditions \eqref{SLT} into the RHS of the inequality $(\dag)$ in \eqref{Lem_eq1}. By removing the negative term $-\epsilon\lambda_{\rm e}\left(\Omega-B_t\right)$, we obtain that
\begin{equation}
\Delta^t-V\mathbb{E}\left[r_t^{\Psi}|\Theta_t\right]\leq D-V\varphi(\epsilon)-\left(Z_t\lambda_{\rm c}+Q_t^{\rm S}+Q_t^{\rm U}\right)\epsilon.
\end{equation}
Taking iterated expectations and telescoping sums, and normalizing by $N\epsilon$, we have that
\begin{equation}
\begin{split}
&\frac{\mathbb{E}\left[\Phi_{t}\right]\!-\!\mathbb{E}\left[\Phi_{0}\right]}{N\epsilon}\!-\!\frac{V}{N\epsilon}\mathsmaller{\sum}_{t=0}^{N-1}\mathbb{E}\left[r_t^{\Psi}\right]
\leq\! \frac{D-V\varphi(\epsilon)}{\epsilon}-\frac{1}{N}\mathsmaller{\sum}_{t=0}^{N}\mathbb{E}\left[Z_t\lambda_{\rm c}+Q_t^{\rm S}+Q_t^{\rm U}\right].
\end{split}
\end{equation}
Letting $N\to\infty$, rearranging the terms and using the fact that $\lim_{N\!\to\!+\!\infty}\frac{1}{N}\sum_{t=0}^{N-1}\mathbb{E}\left[r_t^{\Psi}\right]\leq \bar{R}_{\rm P1}^\ast$, we have
\begin{equation}\label{app2_eq1}
\lim_{N\!\to\!+\!\infty}\frac{1}{N}\mathsmaller{\sum}_{t=0}^{N}\mathbb{E}\left[Z_t\lambda_{\rm c}+Q_t^{\rm S}+Q_t^{\rm U}\right] \leq \frac{D+V\left(\bar{R}_{\rm P1}^\ast-\varphi(\epsilon)\right)}{\epsilon},
\end{equation}
which implies the strong stability of $Q_t^{\rm U}$, $Q_t^{\rm S}$ and $Z_t$, i.e., 
\begin{equation}
\lim_{N\!\to\!+\!\infty}\frac{1}{N}\mathsmaller{\sum}_{t=0}^{N}\mathbb{E}\left[Q_t^{\rm U}\right]<\infty,\lim_{N\!\to\!+\!\infty}\frac{1}{N}\mathsmaller{\sum}_{t=0}^{N}\mathbb{E}\left[Q_t^{\rm S}\right]<\infty,\lim_{N\!\to\!+\!\infty}\frac{1}{N}\mathsmaller{\sum}_{t=0}^{N}\mathbb{E}\left[Z_t\right]<\infty.
\end{equation}
Since $Z_t$, $Q_t^{\rm U}$ and $Q_t^{\rm S}$ are non-negative, we obtain the results in \eqref{lem5_eq2}. Meanwhile, because strong stability implies mean rate stable (see Theorem 2.8 in \cite{Neely2010}), the long-term average power constraint \eqref{Bud_cons} is satisfied according to Lemma \ref{lem7}, which proves c).

\ifCLASSOPTIONcaptionsoff
  \newpage
\fi

\bibliographystyle{IEEEtran}
\bibliography{IEEEabrv,MyRefLib}

\begin{thebibliography}{10}
\providecommand{\url}[1]{#1}
\csname url@samestyle\endcsname
\providecommand{\newblock}{\relax}
\providecommand{\bibinfo}[2]{#2}
\providecommand{\BIBentrySTDinterwordspacing}{\spaceskip=0pt\relax}
\providecommand{\BIBentryALTinterwordstretchfactor}{4}
\providecommand{\BIBentryALTinterwordspacing}{\spaceskip=\fontdimen2\font plus
\BIBentryALTinterwordstretchfactor\fontdimen3\font minus
  \fontdimen4\font\relax}
\providecommand{\BIBforeignlanguage}[2]{{%
\expandafter\ifx\csname l@#1\endcsname\relax
\typeout{** WARNING: IEEEtran.bst: No hyphenation pattern has been}%
\typeout{** loaded for the language `#1'. Using the pattern for}%
\typeout{** the default language instead.}%
\else
\language=\csname l@#1\endcsname
\fi
#2}}
\providecommand{\BIBdecl}{\relax}
\BIBdecl

\bibitem{LiICCC2021}
X.~Li, S.~Bi, and H.~Wang, ``Online cognitive data sensing and processing
  optimization in energy-harvesting edge computing systems,'' submitted to
  \emph{Proc. IEEE/CIC IEEE/CIC International Conference on Communications in
  China (ICCC)} 2021.

\bibitem{Aceto2019}
G.~Aceto, V.~Persico, and A.~Pescap, ``A survey on information and
  communication technologies for industry 4.0: State-of-the-art, taxonomies,
  perspectives, and challenges,'' \emph{IEEE Commun. Surv. Tutor.}, vol.~21,
  no.~4, pp. 3467--3501, Aug. 2019.

\bibitem{Mao2017}
Y.~{Mao}, C.~{You}, J.~{Zhang}, K.~{Huang}, and K.~B. {Letaief}, ``A survey on
  mobile edge computing: the communication perspective,'' \emph{IEEE Commun.
  Surv. Tutor.}, vol.~19, no.~4, pp. 2322--2358, Aug. 2017.

\bibitem{Bi2018a}
S.~{Bi} and Y.~J. {Zhang}, ``Computation rate maximization for wireless powered
  mobile-edge computing with binary computation offloading,'' \emph{IEEE Trans.
  Wireless Commun.}, vol.~17, no.~6, pp. 4177--4190, Jun. 2018.

\bibitem{You2016}
C.~You, K.~Huang, and H.~Chae, ``Energy efficient mobile cloud computing
  powered by wireless energy transfer,'' \emph{IEEE J. Sel. Areas Commun.},
  vol.~34, no.~5, pp. 1757--1771, May 2016.

\bibitem{Huang2020}
L.~Huang, S.~Bi, and Y.-J.~A. Zhang, ``Deep reinforcement learning for online
  computation offloading in wireless powered mobile-edge computing networks,''
  \emph{IEEE Trans. Mob. Comput.}, vol.~19, no.~11, pp. 2581--2593, Nov. 2020.

\bibitem{Ulukus2015}
S.~Ulukus, A.~Yener, E.~Erkip, O.~Simeone, M.~Zorzi, P.~Grover, and K.~Huang,
  ``Energy harvesting wireless communications: A review of recent advances,''
  \emph{IEEE J. Sel. Areas Commun.}, vol.~33, pp. 360--381, Mar. 2015.

\bibitem{Kamalinejad2015}
P.~Kamalinejad, C.~Mahapatra, Z.~Sheng, S.~Mirabbasi, V.~C.~M. Leung, and Y.~L.
  Guan, ``Wireless energy harvesting for the {I}nternet of {T}hings,''
  \emph{{IEEE} Commun. Mag.}, vol.~53, no.~6, pp. 102--108, Jun. 2015.

\bibitem{Xu2017a}
J.~Xu, L.~Chen, and S.~Ren, ``Online learning for offloading and autoscaling in
  energy harvesting mobile edge computing,'' \emph{IEEE Trans. Cogn. Commun.
  Netw.}, vol.~3, pp. 361--373, Sep. 2017.

\bibitem{Min2019}
M.~Min, L.~Xiao, Y.~Chen, P.~Cheng, D.~Wu, and W.~Zhuang, ``Learning-based
  computation offloading for {IoT} devices with energy harvesting,'' \emph{IEEE
  Trans. Veh. Technol.}, vol.~68, pp. 1930--1941, Feb. 2019.

\bibitem{Zhang2018}
G.~Zhang, W.~Zhang, Y.~Cao, D.~Li, and L.~Wang, ``Energy-delay tradeoff for
  dynamic offloading in mobile-edge computing system with energy harvesting
  devices,'' \emph{IEEE Trans. Industr. Inform.}, vol.~14, pp. 4642--4655, Oct.
  2018.

\bibitem{Wu2018a}
H.~Wu, L.~Chen, C.~Shen, W.~Wen, and J.~Xu, ``Online geographical load
  balancing for energy-harvesting mobile edge computing,'' in \emph{Proc. IEEE
  International Conference on Communications (ICC)}, Kansas City, USA, May
  2018, pp. 1938--1883.

\bibitem{Mao2016}
Y.~Mao, J.~Zhang, and K.~B. Letaief, ``Dynamic computation offloading for
  mobile-edge computing with energy harvesting devices,'' \emph{IEEE J. Sel.
  Areas Commun.}, vol.~34, pp. 3590--3605, Dec. 2016.

\bibitem{Chen2020}
Y.~Chen, Y.~Zhang, Y.~Wu, L.~Qi, X.~Chen, and X.~Shen, ``Joint task scheduling
  and energy management for heterogeneous mobile edge computing with hybrid
  energy supply,'' \emph{IEEE Internet Things J.}, vol.~7, pp. 8419--8429, Sep.
  2020.

\bibitem{Quan2008}
Z.~Quan, S.~Cui, H.~V. Poor, and A.~H. Sayed, ``Collaborative wideband sensing
  for cognitive radios,'' \emph{IEEE Signal Process. Mag.}, vol.~25, no.~6, pp.
  60--73, Nov. 2008.

\bibitem{Wu2014}
Q.~Wu, G.~Ding, Y.~Xu, S.~Feng, Z.~Du, J.~Wang, and K.~Long, ``Cognitive
  {I}nternet of {T}hings: A new paradigm beyond connection,'' \emph{IEEE
  Internet Things J.}, vol.~1, pp. 129--143, Apr. 2014.

\bibitem{Khan2017}
A.~A. Khan, M.~H. Rehmani, and A.~Rachedi, ``Cognitive-radio-based {Internet of
  Things}: Applications, architectures, spectrum related functionalities, and
  future research directions,'' \emph{IEEE Wirless. Commun.}, vol.~24, pp.
  17--25, Jun. 2017.

\bibitem{Liu2019}
B.~Liu, J.~Wang, S.~Ma, F.~Zhou, Y.~Ma, and G.~Lu, ``Energy-efficient
  cooperation in mobile edge computing-enabled cognitive radio networks,''
  \emph{IEEE Access}, vol.~7, pp. 45\,382--45\,394, Apr. 2019.

\bibitem{Apostolopoulos2020}
P.~A. Apostolopoulos, E.~E. Tsiropoulou, and S.~Papavassiliou, ``Cognitive data
  offloading in mobile edge computing for {I}nternet of {T}hings,'' \emph{IEEE
  Access}, vol.~8, pp. 55\,736--55\,749, Mar. 2020.

\bibitem{Zhang2020a}
X.~Zhang, A.~Pal, and S.~Debroy, ``Deep reinforcement learning based
  energy-efficient task offloading for secondary mobile edge systems,'' in
  \emph{2020 IEEE 45th LCN Symposium on Emerging Topics in Networking (LCN
  Symposium)}, Sydney, Australia, Nov. 2020, pp. 48--59.

\bibitem{Si2017}
P.~Si, H.~Liang, W.~Wu, and Y.~Zhang, ``Joint resource management in cognitive
  radio and edge computing based industrial wireless networks,'' in \emph{Proc.
  IEEE Global Communications Conference (GLOBECOM)}, Singapore, Dec. 2017, pp.
  1--6.

\bibitem{Du2018}
J.~Du, F.~R. Yu, X.~Chu, J.~Feng, and G.~Lu, ``Computation offloading and
  resource allocation in vehicular networks based on dual-side cost
  minimization,'' \emph{IEEE Trans. Veh. Technol.}, vol.~68, no.~2, pp.
  1079--1092, Nov. 2018.

\bibitem{Bi2020}
S.~Bi, L.~Huang, and Y.-J.~A. Zhang, ``Joint optimization of service caching
  placement and computation offloading in mobile edge computing systems,''
  \emph{IEEE Trans. Wireless Commun.}, vol.~19, no.~7, pp. 4947--4963, Jul.
  2020.

\bibitem{Liu2020}
T.~Liu, X.~Qu, and W.~Tan, ``Online optimal control for wireless cooperative
  transmission by ambient {RF} powered sensors,'' \emph{IEEE Trans. Wireless
  Commun.}, vol.~19, pp. 6007--6019, Jun. 2020.

\bibitem{Neely2010}
M.~J. Neely, \emph{``Stochastic network optimization with application to
  communication and queueing systems''}.\hskip 1em plus 0.5em minus 0.4em\relax
  Morgan \& Claypool, 2010.

\end{thebibliography}

\end{document}